\documentclass[11pt]{article}


\setlength{\topmargin}{0in}
\setlength{\headheight}{0in}
\setlength{\headsep}{0in}
\setlength{\textheight}{8.75in}
\setlength{\oddsidemargin}{0in}
\setlength{\textwidth}{6.5in}



\usepackage[T1]{fontenc}
\usepackage{textcomp}
\usepackage{palatino}
\usepackage{mathpazo}
\usepackage{stmaryrd}


\usepackage{hyperref}
\hypersetup{pdfpagemode=UseNone}


\usepackage{amsfonts}
\usepackage{amssymb}
\usepackage{amsmath}
\usepackage{latexsym}
\usepackage{amsthm}
\usepackage{eepic}
\usepackage{sectsty}
\usepackage{graphicx}
\usepackage[usenames]{color}


\newtheorem{theorem}{Theorem}
\newtheorem{lemma}[theorem]{Lemma}
\newtheorem{prop}[theorem]{Proposition}
\newtheorem{cor}{Corollary}
\theoremstyle{definition}
\newtheorem{definition}{Definition}

\newtheorem{fact}{Fact}

\newtheorem{corollary}{Corollary}


\newcommand{\tinyspace}{\mspace{1mu}}

\newcommand{\snorm}[1]{\lVert\tinyspace#1\tinyspace\rVert}

\newcommand{\ceil}[1]{\left\lceil #1 \right\rceil}

\newcommand{\defeq}{\stackrel{\smash{\text{\tiny def}}}{=}}
\newcommand{\tr}{\operatorname{Tr}}

\newcommand{\ip}[2]{\left\langle #1 , #2\right\rangle}

\newcommand{\equil}[1]{\check{#1}}
\def\({\left(}
\def\){\right)}
\def\I{\mathbb{1}}

\newcommand{\fid}{\operatorname{F}}
\newcommand{\setft}[1]{\mathrm{#1}}
\newcommand{\lin}[1]{\setft{L}\left(#1\right)}
\newcommand{\density}[1]{\setft{D}\left(#1\right)}

\newcommand{\pos}[1]{\setft{Pos}\left(#1\right)}

\newenvironment{mylist}[1]{\begin{list}{}{
    \setlength{\leftmargin}{#1}
    \setlength{\rightmargin}{0mm}
    \setlength{\labelsep}{2mm}
    \setlength{\labelwidth}{8mm}
    \setlength{\itemsep}{0mm}}}
    {\end{list}}

\newcommand{\class}[1]{\textup{#1}}


\def\X{\mathcal{X}}
\def\Y{\mathcal{Y}}
\def\Z{\mathcal{Z}}

\def\A{\mathcal{A}}

\def\Q{\mathcal{Q}}

\begin{document}

\title{\bf Equilibrium Value Method for the Proof of QIP=PSPACE}

\author{%
  Xiaodi Wu
  \footnote{The work was completed when the author was visiting the Institute for Quantum Computing , University of Waterloo as a research assistant. } \\
  \it \small Institute for Quantum Computing,  University of
  Waterloo,  Ontario, Canada, \\
  \it \small Department of Electrical Engineering and Computer Science, University of Michigan, Ann Arbor, USA
}


\maketitle

\begin{abstract}
We provide an alternative proof of \class{QIP}=\class{PSPACE} to the
recent breakthrough result ~\cite{JainJUW09}. Unlike solving some
semidefinite programs that captures the computational power of
quantum interactive proofs, our method starts with one
\class{QIP}-Complete problem which computes the diamond norm between
two admissible quantum channels. The key observation is that we can
convert the computation of the diamond norm into the computation of
some equilibrium value. The later problem, different from the former
semidefinite programs, is of better form, easier to solve and could
be interesting for its own sake. The multiplicative weight update
method is also applied to solve the equilibrium value problem,
however, in a relatively simpler way than the one in the original
proof ~\cite{JainJUW09}. As a direct byproduct, we also provide a NC
algorithm to compute the diamond norm of a class of quantum
channels. Furthermore, we provide a generalized form of equilibrium
value problems that can be solved in the same way as well as
comparisons to semidefinite programs.
\end{abstract}

\section{Introduction} \label{sec:introduction}
The \emph{interactive proof system} model, which extends the concept
of \emph{efficient proof verification}, has gradually become a
fundamental notion in the theory of computational complexity since
its introduction ~\cite{GoldwasserMR85, Babai85} in the mid 1980s.
In this model, a computationally bounded \emph{verifier} interacts
with a \emph{prover} with unbounded computational power in one or
more rounds. The prover wants to convince the \emph{verifier} to
\emph{accept}(\emph{reject}) the input, and the verifier will make
its decision based on the interacting process.

The expressive power of this kind of interactive proof system model
with at most polynomial rounds of communications is characterized
~\cite{Shen92,Shamir92,LundFKN92} by the well-known relationship
\[
    \text{IP=PSPACE}
\]
through the technique commonly known as \emph{arithmetization}. Many
variants of the interactive proof system model have been studied by
introducing new ingredients, such as the public-coin interactive
proofs~\cite{Babai85,BabaiM88,GoldwasserS89}, multi-prover
interactive proofs~\cite{Ben-OrGKW88}, zero-knowledge interactive
proofs~\cite{GoldwasserMR85,GoldreichMW91} as well as the
competing-prover interactive proofs ~\cite{FeigeK97}.

This paper mainly works with the \emph{quantum interactive proof
system}, which is defined~\cite{KitaevW00,Watrous99-qip-focs} in a
similar way to ordinary interactive proof systems except the
verifier and the prover have access to quantum computers. Similar to
the classical cases, several variants of quantum interactive proof
systems have been studied, including the ordinary quantum
interactive proofs~\cite{Watrous99-qip-focs,KitaevW00}, public-coin
quantum interactive proofs~\cite{MarriottW05}, zero-knowledge
quantum interactive
proofs~\cite{Watrous09a,Kobayashi08,HallgrenKSZ08}, multi-prover
quantum interactive proofs~\cite{KempeKMV09,KobayashiM03} and the
competing-prover quantum interactive proofs
~\cite{Gutoski05,GutoskiW05,GutoskiW07}. The complexity class
\class{QIP} known as the problems having quantum interactive proof
systems satisfies~\cite{KitaevW00}
\[
        \text{PSPACE=IP}\subseteq\text{QIP}\subseteq\text{EXP}
\]
Along with the introduction of the complexity class \class{QIP},
several complete problems for this complexity class have been
discovered. The first complete problem, called \emph{close images},
was first proposed in 2000~\cite{KitaevW00}. Several relevant
problems which can be reduced to \emph{close images} were later
discovered~\cite{RosgenW05,Rosgen08}. Especially, the \emph{quantum
circuits distinguishability} problem, which was
proved~\cite{RosgenW05} to be QIP-complete, serves as our start
point to prove \class{QIP}=\class{PSPACE}.

Recently, a big breakthrough~\cite{JainJUW09} that proves
\class{QIP}=\class{PSPACE} uses the primal-dual
approach~\cite{AroraK07} based on the \emph{multiplicative weights
update method} to solve a certain kind of semidefinite programs that
characterize the computational power of \class{QMAM}. The latter
complexity class was proved~\cite{MarriottW05} to have equivalent
expressive power as \class{QIP}. The multiplicative weights update
method is a well-known framework (or meta-algorithm) which
originates in many fields. Its matrix version, which was recently
developed and discussed in a survey paper~\cite{AroraHK05} and the
PhD thesis of Kale~\cite{Kale07}, was shown to be a great success in
extending the potential applications of this famous framework.
Particularly, a combinatorial primal-dual approach for solving
semidefinite programs (SDP) was proposed~\cite{AroraK07} based on
the matrix multiplicative weights update (MMW) method. Under mild
conditions, the primal-dual approach can be used to improve the time
performance of many known approximation algorithms via semidefinite
program relaxations. The main advantage of the MMW method together
with the primal-dual SDP solver for the purpose of simulating
quantum complexity classes is that the resultant algorithm can be
easily implemented efficiently in parallel \footnote{We call any
algorithm efficient in parallel if it is in \class{NC}. However, in
our case, the size of the matrix representation of any quantum
system is exponential in the input size. Thus, \class{NC(poly)} is
considered as the final complexity class.}. By making use of the
known result \class{NC(poly)}=\class{PSPACE}~\cite{Borodin77}, we
can solve these SDPs in \class{PSPACE} and hence show that some
quantum complexity class is contained by \class{PSPACE}. Before the
proof of \class{QIP}=\class{PSPACE}~\cite{JainJUW09}, similar ideas
were applied to show the containment \class{QIP(2)}$\subseteq$
\class{PSPACE}~\cite{JainUW09} and \class{QRG(1)}$\subseteq$
\class{PSPACE}~\cite{JainW09}.


Unlike proving the result based on the formulation of the definition
of the computational class QMAM, our proof starts with one
QIP-Complete problem. The problem called \emph{quantum circuits
distinguishability} \footnote{It's been observed that one can also
start with another QIP-Complete problem \emph{close
images}~\cite{KitaevW00} or the protocol to simulate \class{QIP}
with competing provers in~\cite{GutoskiW05} to prove the same result
in almost the same way. However, the particular choice here connects
our algorithm to the computation of diamond norm which is of
independent interest.} computes the diamond norm between two
mixed-state quantum circuits. If we were again to directly compute
the diamond norm by its definition, we would encounter some SDPs or
convex programs which are more complicated than the one we would
have if start with QMAM. Although time-efficient algorithms have
already been proposed
~\cite{Watrous09b,BATS09} to approximate the diamond norm, it is
unknown whether these methods can also be space-efficient, namely
running in \class{PSPACE}. The crucial observation here, to
circumvent the problem above, is to change the form of the diamond
norm before the computation.
The resultant problem (see \textbf{Theorem}~\ref{thm:conversion})
has a very neat form and can be expressed as an equilibrium value.
Most importantly, as we will see later, there is a space-efficient
algorithm to solve the latter problem in \class{PSAPCE}. To our
knowledge, this conversion for the first time establishes the
connection between the computation of the diamond norm and the
computation of some equilibrium value. Precisely,
Theorem~\ref{thm:conversion} claims that under certain conditions
the gap between two promises of the diamond norm can be transferred
to the gap between two promises of the constructed equilibrium
value.


The equilibrium value, or better known as the value with
\emph{minimax-maximin} form, is an important concept in theoretical
computer science. For instance, many game theory related problems
can be characterized naturally in this form. The fact that we can
exchange the positions of \emph{min} and \emph{max} in any
equilibrium value makes the problem well structured and provides a
simpler iterative algorithm (also based on multiplicative weight
update method) to approximate the equilibrium value than the one for
SDPs. Similar ideas were known in the study of game
theory\footnote{See the survey~\cite{AroraHK05} for more reference.}
before, and was applied~\cite{JainW09} in the proof of
\class{QRG(1)}$\subseteq$\class{PSPACE}. Due to the merits of the
equilibrium value problem, the converted problem from the diamond
norm in our paper has a relatively simpler solution (see
\textbf{Theorem}~\ref{thm:algorithm}) than the SDP considered
previously~\cite{JainJUW09}. As a result, our main theorem provides
a simplified proof for the following fact.


\begin{cor}
  \class{QIP}=\class{PSPACE}
\end{cor}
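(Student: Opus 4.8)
The plan is to chain three ingredients: the \class{QIP}-completeness of \emph{quantum circuits distinguishability}, the promise-gap-preserving conversion of Theorem~\ref{thm:conversion}, and the space-efficient algorithm of Theorem~\ref{thm:algorithm}; and then to combine the resulting upper bound with the classical lower bound \class{PSPACE}=\class{IP}$\subseteq$\class{QIP}.

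First I would recall that \emph{quantum circuits distinguishability} is \class{QIP}-complete under polynomial-time reductions~\cite{RosgenW05}: every language in \class{QIP} reduces to the promise problem of deciding, given descriptions of two mixed-state quantum circuits $Q_0,Q_1$, whether $\snorm{Q_0-Q_1}_{\diamond}$ is large (yes instances) or small (no instances), with an inverse-polynomial --- in fact, after standard error reduction, essentially constant --- separation between the two cases. Hence it suffices to place this promise problem in \class{PSPACE}. Next I would apply Theorem~\ref{thm:conversion} to map the instance $(Q_0,Q_1)$ to an equilibrium-value instance, verifying that the two promises on $\snorm{Q_0-Q_1}_{\diamond}$ are carried to two separated promises on the constructed equilibrium value and that the new gap is still at least inverse-polynomial in the input length; this is exactly the ``under certain conditions'' hypothesis alluded to in the introduction, and I would check that the circuits output by the \class{QIP}-completeness reduction satisfy it, padding or amplifying if necessary.

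Then Theorem~\ref{thm:algorithm} supplies an algorithm that approximates this equilibrium value to within the required additive accuracy; since the underlying matrices have dimension exponential in the input length, the relevant statement is that it runs in \class{NC} as a function of the matrix dimension, i.e. in \class{NC(poly)} as a function of the original input. Invoking \class{NC(poly)}=\class{PSPACE}~\cite{Borodin77} and composing with the polynomial-time reductions above (using closure of \class{PSPACE} under such reductions) gives \class{QIP}$\subseteq$\class{PSPACE}. Combining this with \class{PSPACE}=\class{IP}$\subseteq$\class{QIP}, where \class{IP}$\subseteq$\class{QIP} holds because a quantum verifier can simulate a classical one~\cite{KitaevW00}, yields \class{QIP}=\class{PSPACE}.

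The main obstacle is not the assembly above but the quantitative bookkeeping that makes it go through: one must ensure that the promise gap surviving Theorem~\ref{thm:conversion} is wide enough that the number of iterations and the per-iteration precision demanded by the multiplicative-weights procedure of Theorem~\ref{thm:algorithm} remain polynomially bounded, so that the whole computation is genuinely in \class{NC(poly)}; and one must check that every intermediate object --- Gram matrices of the circuits' Kraus operators, the projectors and partial traces built from them, and above all the matrix exponentials of weighted sums arising in the MMW updates --- admits a logspace-uniform, polylogarithmic-depth parallel implementation on inputs of exponential size. These are the points where the argument could break, and where the advantage of working with the equilibrium value rather than the original semidefinite program is meant to show up.
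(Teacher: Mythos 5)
Your proposal is correct and follows essentially the same route as the paper: reduce to the \class{QIP}-complete problem $\class{QCD}_{2-\varepsilon,\varepsilon}$ with constant gap, convert it to an equilibrium value via Theorem~\ref{thm:conversion}, approximate that value with the multiplicative-weights algorithm of Theorem~\ref{thm:algorithm} in \class{NC(poly)}=\class{PSPACE}, and combine with $\class{IP}\subseteq\class{QIP}$. The quantitative and precision concerns you flag are exactly the ones the paper addresses in Theorem~\ref{thm:NC_simulation} and Appendix~\ref{sec:precision_issue}, and the constant promise gap makes the iteration count $T=\ceil{16\ln N/\delta^2}$ polynomial as required.
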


As a sequence of the connection we build between the computation of
the diamond norm and the computation of some equilibrium value, we
also demonstrate how our algorithm can be used to approximate the
diamond norm to high precision of a class of channels efficiently in
parallel. Precisely, we show

\begin{cor}Given the classical description of any two
admissible quantum channels $\Q_0, \Q_1$, the diamond norm of their
difference $\snorm{\Q_0-\Q_1}_\diamond$ can be approximated in
\class{NC} with inverse poly-logarithm precision\footnote{Here the
input size is exactly the size of those matrices representing two
channels. Thus, the precision scales down to inverse
poly-logarithm.}.
\end{cor}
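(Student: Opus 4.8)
The plan is to assemble the corollary from the two main ingredients already established. First I would apply Theorem~\ref{thm:conversion} to the pair $\Q_0,\Q_1$: from the classical descriptions — that is, the lists of entries of a Kraus (or Stinespring) representation of each channel — one builds, in a trivially parallel fashion, the equilibrium-value instance whose value is tied to $\snorm{\Q_0-\Q_1}_\diamond$ by the explicit relation of that theorem. The point to emphasize is a bookkeeping one: because the channels are presented explicitly, the operators occurring in the instance, and the sets over which the \emph{min} and the \emph{max} range, all have dimension $d=\mathrm{poly}(n)$ in the input size $n$, rather than $2^{\mathrm{poly}(n)}$ as in the generic \class{QIP} reduction. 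Writing this instance down is an \class{NC} computation.

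Second, I would run the algorithm of Theorem~\ref{thm:algorithm} on this instance. That algorithm is a matrix-multiplicative-weights iteration; to reach additive accuracy $\varepsilon$ it performs $T=O(\log(d)/\varepsilon^2)$ rounds (up to the usual width factor, discussed below). In the full \class{QIP}=\class{PSPACE} argument $d$ is exponential and $\varepsilon$ is inverse polynomial, forcing $T$ polynomial and each round to touch exponential-size matrices — hence only \class{NC(poly)}=\class{PSPACE}. Here $d=\mathrm{poly}(n)$, so $\log d = O(\log n)$, and if we are content with $\varepsilon = 1/\mathrm{polylog}(n)$ then $T=\mathrm{polylog}(n)$, so the computation is a polylog-depth composition of polylog-many rounds, each acting on polynomial-size matrices.

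The bulk of the verification — and the step I expect to be the main, if routine, obstacle — is confirming that a single round lies in \class{NC} for polynomial-size matrices. Each round forms matrix sums and products, a matrix exponential, a trace normalization, and the evaluation of the inner optimization of the minimax expression; these are standard parallel linear-algebra primitives (iterated matrix product, matrix exponential via a truncated series combined with repeated squaring), and one checks that carrying $\mathrm{polylog}(n)$ bits of precision per entry keeps the accumulated error below $\varepsilon$. One must also verify that the width parameter controlling the MMW bound is polynomially bounded for the constructed instance, so that $T$ stays polylogarithmic; this should follow from the normalization built into Theorem~\ref{thm:conversion}.

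Finally I would combine the pieces: the iteration outputs, by Theorem~\ref{thm:algorithm}, a value within $\varepsilon$ of the equilibrium value, and inverting the relation of Theorem~\ref{thm:conversion} converts this into an approximation of $\snorm{\Q_0-\Q_1}_\diamond$ with additive error $1/\mathrm{polylog}(n)$, computed by an \class{NC} circuit. The inverse-poly-logarithmic precision is intrinsic to the approach: pushing $\varepsilon$ down to $1/\mathrm{poly}(n)$ would raise $T$ to $\mathrm{poly}(n)$ and take the circuit out of \class{NC}.
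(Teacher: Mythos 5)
Your outline correctly reproduces the easy half of the argument: because the input is the explicit matrix description of the channels, every matrix in the construction of Theorem~\ref{thm:conversion} has dimension polynomial in the input size, so the $T=O(\log N/\delta^2)$ rounds of the iteration of Theorem~\ref{thm:algorithm} become polylogarithmically many \class{NC} rounds and the whole computation stays in \class{NC}. That is exactly Proposition~\ref{prop:simple_case} of the paper. The genuine gap is in your final step, ``inverting the relation of Theorem~\ref{thm:conversion}.'' That relation is not invertible: it rests on the Fuchs--van de Graaf inequalities, which only sandwich the diamond norm as
\[
 2\bigl(1-\equil{\lambda}(\Xi)\bigr)\;\leq\;\snorm{\Q_0-\Q_1}_\diamond\;\leq\;2\sqrt{1-\equil{\lambda}(\Xi)^2},
\]
and the two sides differ by a constant for generic values of $\equil{\lambda}(\Xi)$ (for instance $\equil{\lambda}(\Xi)=1/2$ only pins the diamond norm to the interval $[1,\sqrt{3}\tinyspace]$). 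Equivalently, the promise $\snorm{\Q_0-\Q_1}_\diamond\geq a$ versus $\leq b$ translates into a gap for $\equil{\lambda}(\Xi)$ only when $a^2-(4b-b^2)>0$, which fails for many pairs with $a-b$ a constant (take $a=1$, $b=1/2$). So computing $\equil{\lambda}(\Xi)$, even exactly, does not yield an additive $1/\mathrm{polylog}$ approximation of the diamond norm, and your plan proves only the restricted Proposition~\ref{prop:simple_case}, not the corollary.

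The paper closes this gap with a detour you are missing entirely: it recasts the promised diamond-norm problem as the acceptance-probability promise of the interactive protocol $\mathcal{P}_\diamond[\Q_0,\Q_1]$ (completeness $1/2+a/4$, soundness $1/2+b/4$), applies the Kitaev--Watrous parallelization and amplification constructions to boost this to a $(1,1/2)$ gap, converts the resulting protocol back into a \emph{new} pair of channels via the construction underlying the \class{QIP}-hardness of $\class{QCD}_{1.9,0.1}$, and only then invokes Proposition~\ref{prop:simple_case}; the stated corollary finally follows by a binary search over $O(\mathrm{polylog})$ promised instances. Each of these conversions must also be checked to be computable in \class{NC}, an additional (if routine) verification that your proposal does not address.
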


This result supplements the time-efficient algorithm for calculating
diamond norm in~\cite{BATS09,Watrous09b}. Although our algorithm
only works for a special class of channels (also one of the most
interesting cases), extensions of the current algorithm for a larger
class of channels could be obtained if more complicated analysis is
involved.

It is interesting to compare the proof of \class{QIP}=\class{PSPACE}
in this paper and the one in~\cite{JainJUW09}. Our comparison
represented here is threefold. First, obviously the two approaches
diverge at the start point. However, this difference is actually
subtle. As we mentioned before, the QIP-Complete we considered could
be replaced by the \emph{close images} problem almost with no change
of the latter proof. If one investigates the result
\class{QIP}=\class{QMAM} carefully, one will find this equivalence
also comes from the \emph{close images} problem. Recall that the
diamond norm problem is more naturally formulated as some SDPs or
convex programs just as QMAM does. It thus seems like we
\emph{deliberately} formulate the problem by an equilibrium value
instead of a more natural formulation. We consider this as the main
difference between the two approaches.


Second, different formulations hence lead to the need of algorithms
for different problems, SDPs and equilibrium value problems in our
case. Due to the relation \class{PSPACE}=\class{NC(poly)}, it
suffices to find algorithms that are efficient in parallel.
Fortunately, such algorithms for both problems can be obtained based
on the matrix multiplicative weight update method. Nevertheless, the
two algorithms are quite different in several aspects. We refer
curious readers to Kale's thesis~\cite{Kale07} for complete details,
while a brief comparison can be found below. We will refer the
algorithm for SDP in Kale's thesis as the primal-dual SDP solver
because there are indeed other methods for solving SDP also based on
the matrix multiplicative weight update method. Conceptually, the
primal-dual SDP solver exploits the duality between the primal and
dual problems of a certain SDP while minimax-maximin equality is
made use of for equilibrium value problems. It turns out the minimax
relation gives a simpler proof of the correctness of the algorithm
than the duality relation. Technically, both algorithms require
efficient implementation of some oracles. For those SDPs and
equilibrium values about QIP, the oracle for the equilibrium value
problem is easier to design than the one for SDP.\footnote{ It is
not easy to compare directly since the algorithm in~\cite{JainJUW09}
unpacks everything and only uses one sub-routine, namely
\emph{projection onto positive subspace} (the same as the one in our
algorithm). Nevertheless, if one rewrites the algorithm component by
component, one could find out that oracle is slightly harder to
solve. Moreover, additional assumptions like the invertibility of
some matrices are also necessary to solve that oracle. }
Furthermore, SDP solver faces an additional difficulty which is not
applicable to the equilibrium value problem. As one restriction of
the matrix multiplicative weight update method, any solution
obtained for SDP problems only satisfies the constraints
approximately. Namely, one needs to convert the raw solution into
exactly satisfiable solution. However, there is no control in
general about the change of the object function value after this
conversion. Therefore, converting approximate solutions to exact
satisfiable solutions without changing the object function value a
lot is another difficulty in designing SDP solver.


Finally, it is hard to compare the performance (e.g, in terms of
time, space or other resources) of those methods for general
equilibrium value problems and SDPs. We do not even know to what
extent those methods can be applied to general equilibrium value
problems or SDPs. The analysis might heavily depend on the
particular form of the problem itself. However, some progress has
been made recently~\cite{Wu10,GutoskiW10} in finding efficient
algorithms for a larger class of equilibrium value problems and
SDPs. Particulary, there exists an equilibrium-value-based SDP
solver~\cite{Wu10} in addition to the primal-dual SDP solver. The
new SDP solver provides a generic way to design efficient oracles,
whereas a generic way of converting approximate solution to exactly
satisfiable solution remains unknown.


The rest of this paper is organized as follows. We briefly survey
some preliminaries which will be useful in our proof in
Section~\ref{sec:preliminaries}. The conversion from the
QIP-Complete problem to some equilibrium value problem lies in
Section~\ref{sec:conversion}, which is followed by the main proof of
\class{QIP}=\class{PSPACE} in Section~\ref{sec:MMW_PSPACE}. The
algorithm for computing the diamond norm is discussed in
Section~\ref{sec:diamond}.  We conclude the whole paper with the
summary, Section~\ref{sec:summary}, where we provide further
discussions about the equilibrium value problem and some open
problems. Before the readers move on to the next section, there is
one point to make clear. We will not take care of the precision
issues with the \class{NC} implementation in the main part of this
paper. Instead, we will assume such implementation can be made
exactly and deal with precision issues in
Appendix~\ref{sec:precision_issue}

\section{Preliminaries} \label{sec:preliminaries}
This section contains a summary of the fundamental notations about
the useful linear-algebra facts in quantum information.  For the
most part of this section, it is meant to make clear the notations
and the terminology used in this paper. For those readers who are
not familiar with these concepts, we recommend them to refer to
~\cite{Bhatia97,KitaevW02,NielsenC00,Watrous08}.

A \emph{quantum register} refers to a collection of qubits, usually
represented by a complex Euclidean spaces of the form
$\X=\mathbb{C}^\Sigma$ where $\Sigma$ refers to some finite
non-empty set of the possible states.

For any two complex Euclidean spaces $\X,\Y$, let $\lin{\X,\Y}$
denote the space of all linear mappings (or operators) from $\X$ to
$\Y$ ($\lin{\X}$ short for $\lin{\X,\X}$). An operator $A \in
\lin{\X,\Y}$ is a \emph{linear isometry} if $A^{\ast}A=\I_\X$ where
$A^{\ast}$ denotes the adjoint (or conjugate transpose) of $A$.
%

 An operator $A\in \lin{\X}$ is \emph{Hermitian} if
$A=A^{\ast}$. The eigenvalues of a Hermitian operator are always
real. For $n=\dim{\X}$, we write
\[
  \lambda_1(A)\geq \lambda_2(A)\geq \cdots \geq \lambda_n(A)
\]
to denote the eigenvalues of $A$ sorted from largest to smallest. An
operator $P\in \lin{\X}$ is \emph{positive semidefinite}, the set of
which is denoted by $\pos{\X}$, if $P$ is Hermitian and all of its
eigenvalues are nonnegative, namely $\lambda_n(P)\geq 0$. An
operator $\rho \in \pos{\X}$ is a \emph{density operator}, the set
of which is denoted by $\density{\X}$, if it has trace equal to 1.
It should be noticed that a \emph{quantum state} of a quantum
register $\X$ is represented by a density operator $\rho \in
\density{\X}$.


The Hilbert-Schmidt inner product on $\lin{\X}$ is defined by
\[
   \ip{A}{B}=\tr{A^{\ast}B}
\]
for all $A,B \in \lin{\X}$.

%

A \emph{super-operator} (or quantum channel) is a linear mapping of
the form
\[
  \Psi : \lin{\X} \rightarrow \lin{\Y}
\]
A super-operator $\Psi$ is said to be \emph{positive} if $\Psi(X)
\in \pos{\Y}$ for any choice of $X \in \pos{\X}$, and is
\emph{completely positive} if $\Psi \otimes \I_{\lin{\Z}}$ is
positive for any choice of a complex vector space $\Z$.  The
super-operator $\Psi$ is said to be \emph{trace-preserving} if
$\tr{\Psi(X)} =\tr{X}$ for all $X \in \lin{\X}$. A super-operator
$\Psi$ is \emph{admissible} if it is completely positive and
trace-preserving. Admissible super-operators represent the
discrete-time changes in quantum systems that, in principle, can be
physically realized.

One can also define the adjoint super-operator of $\Psi$, denoted by
\[
 \Psi^{\ast} : \lin{\Y} \rightarrow \lin{\X}
\]
to be the unique linear mapping that satisfies,
\[
 \ip{B}{\Psi(A)}=\ip{\Psi^{\ast}(B)}{A}
\]
for all operators $A\in \lin{\X}$ and $B \in \lin{\Y}$.


The \emph{Stinespring representations} of super-operators is as
follows. For any super-operator $\Psi$, there is some auxiliary
space $\Z$ and $A,B \in \lin{\X, \Y \otimes \Z}$ such that
\[
   \Psi(X)=\tr_{\Z} A X B^{\ast}
\]
for all $X \in \lin{\X}$.  When $\Psi$ is admissiable, we have $A=B$
and $A$ is a linear isometry.


 A \emph{quantum circuit} is an acyclic network of \emph{quantum
gates} connected by wires. The quantum gates represent feasible
quantum operations, involving constant numbers of qubits. In a
\emph{mixed state quantum circuit}~\cite{AKN98}, instead of using
unitary operations as quantum gates, we allow the gates to be from
any set of quantum admissible operations. In this more flexible
circuit model, some part of the qubits might be discarded (or
\emph{traced out}) during the evolution of the circuit.

The \emph{trace norm} of an operator $A \in \lin{\X}$ is denoted by
$\snorm{A}_1$ and defined to be
\[
  \snorm{A}_1= \tr \sqrt{ A^\ast A}
\]
When $A$ is Hermitian, we have
\[
  \snorm{A}_1= \max\{ \ip{P_0-P_1}{A}: P_0,P_1 \in \pos{\X},
  P_0+P_1=\I_\X\}
\]

The diamond norm of a super-operator $\Psi: \lin{\X} \rightarrow
\lin{\Y} $ is defined to be
\[
   \snorm{\Psi}_{\diamond}=\max_{\snorm{X}_1 \leq 1} \snorm{\Psi
   \otimes \I_{\X}(X)}_1
\]
\noindent Because of including the effect of using entanglement
between the input and some auxiliary space, the diamond norm serves
as a good measure of the distinguishability between quantum
operations. Furthermore, we can show
\begin{fact} \label{fact:fmax} ~\cite{KitaevW02}
If a quantum channel $\Psi$ can be represented by $\Psi(X)=\tr_{\Z}
A X B^{\ast}$ where $A,B \in \lin{\X, \Y \otimes \Z}$, define the
new channels
\[
   \Psi_{A}(X)=\tr_{\Y} A X A^{\ast} \quad , \quad
   \Psi_{B}(X)=\tr_{\Y} B X B^{\ast}
\]
then the diamond norm of this channel $\Psi$ is equal to
\begin{equation} \label{eqn:diamond_norm}
\snorm{\Psi}_{\diamond}=\fid_{\max}(\Psi_A, \Psi_B)
\end{equation}
where
\[
 \fid_{\max}(\Psi_{A}, \Psi_{B})=\max\{ \fid(\Psi_{A}(\varrho), \Psi_{B}(\zeta)):\varrho,\zeta\in\density{\X} \}
\]
and
\[
  \fid(P,Q)=\snorm{\sqrt{P}\sqrt{Q}}_1
\]
which is a generalization of the fidelity between quantum states.
\end{fact}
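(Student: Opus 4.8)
The plan is to massage the optimization defining $\snorm{\Psi}_\diamond$ step by step until it becomes precisely the variational problem that Uhlmann's theorem evaluates as a fidelity. Throughout I would identify $\X\otimes\X$ with $\lin{\X}$ by vectorization and use $(C\otimes\I)\vec(D)=\vec(CD)$.

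First I would reduce the diamond norm to rank-one inputs. The map $X\mapsto\snorm{(\Psi\otimes\I_\X)(X)}_1$ is convex, and the trace-norm unit ball of $\lin{\X\otimes\X}$ is the convex hull of the operators $uv^{\ast}$ with $u,v$ unit vectors (read off from a singular value decomposition); hence the maximum in $\snorm{\Psi}_\diamond=\max_{\snorm{X}_1\le 1}\snorm{(\Psi\otimes\I_\X)(X)}_1$ is attained at such an extreme point. Writing $u=\vec(P)$ and $v=\vec(Q)$, the unit-norm constraint becomes $\tr(P^{\ast}P)=\tr(Q^{\ast}Q)=1$, equivalently $\varrho\defeq PP^{\ast}$ and $\zeta\defeq QQ^{\ast}$ are density operators on $\X$. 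Using $\Psi(X)=\tr_\Z(AXB^{\ast})$ one then gets $(\Psi\otimes\I_\X)\bigl(\vec(P)\vec(Q)^{\ast}\bigr)=\tr_\Z\bigl(\vec(AP)\vec(BQ)^{\ast}\bigr)$, where $\vec(AP)$ and $\vec(BQ)$ are regarded as vectors in $(\Y\otimes\Z)\otimes\X$.

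Next I would rewrite the trace norm variationally and recognize purifications. Applying $\snorm{M}_1=\max\{\abs{\ip{V}{M}}:V\text{ unitary}\}$ to $M=\tr_\Z\bigl(\vec(AP)\vec(BQ)^{\ast}\bigr)\in\lin{\Y\otimes\X}$ gives
\[
  \snorm{\Psi}_\diamond=\max\Bigl\{\,\abs{\bra{\vec(BQ)}\,(V\otimes\I_\Z)\,\ket{\vec(AP)}}\,:\,\tr(P^{\ast}P)=\tr(Q^{\ast}Q)=1,\ V\in\unitary{\Y\otimes\X}\Bigr\}.
\]
Now I would view $\vec(AP)$ as a vector in $\Z\otimes(\Y\otimes\X)$: its reduced state on $\Z$ is $\tr_{\Y\otimes\X}\bigl(\vec(AP)\vec(AP)^{\ast}\bigr)=\tr_\Y(APP^{\ast}A^{\ast})=\Psi_A(\varrho)$, so $\vec(AP)$ --- and likewise $(V\otimes\I_\Z)\vec(AP)$ for any unitary $V$ on the purifying factor $\Y\otimes\X$ --- is a purification of $\Psi_A(\varrho)$; symmetrically $\vec(BQ)$ purifies $\Psi_B(\zeta)$. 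A dimension count shows that $\dim(\Y\otimes\X)$ is at least $\rank\Psi_A(\varrho)$ and at least $\rank\Psi_B(\zeta)$, so this purifying system is large enough for Uhlmann's theorem to be tight inside $\Z\otimes(\Y\otimes\X)$. The bound $\snorm{\Psi}_\diamond\le\fid_{\max}(\Psi_A,\Psi_B)$ then follows, because every quantity in the display is an overlap of a purification of some $\Psi_A(\varrho)$ with one of $\Psi_B(\zeta)$, hence at most $\fid(\Psi_A(\varrho),\Psi_B(\zeta))\le\fid_{\max}(\Psi_A,\Psi_B)$. For the reverse bound I would fix $\varrho$ and $\zeta$, use Uhlmann to pick optimal purifications in $\Z\otimes(\Y\otimes\X)$ attaining $\fid(\Psi_A(\varrho),\Psi_B(\zeta))$, note that since $\vec(A\sqrt{\varrho})$ and $\vec(B\sqrt{\zeta})$ are already purifications in that space the optimal ones must equal $(\I_\Z\otimes W_A)\vec(A\sqrt{\varrho})$ and $(\I_\Z\otimes W_B)\vec(B\sqrt{\zeta})$ for unitaries $W_A,W_B$ on $\Y\otimes\X$, and then read off $P=\sqrt{\varrho}$, $Q=\sqrt{\zeta}$, $V=W_B^{\ast}W_A$ as a feasible point of the displayed optimization attaining $\fid(\Psi_A(\varrho),\Psi_B(\zeta))$; maximizing over $\varrho,\zeta$ finishes the argument.

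I expect the delicate point to be this last step: making the dimension bound precise enough that Uhlmann's equality is genuinely achieved inside the \emph{fixed} space $\Z\otimes(\Y\otimes\X)$, rather than only in some larger ambient space, and --- relatedly --- keeping careful track that the partial traces, the action of $\Psi$, and the vectorization each act on the intended tensor factor, so that the roles of $\X$, $\Y$, $\Z$ remain consistent. The convexity reduction to rank-one inputs and the variational formula for the trace norm are routine.
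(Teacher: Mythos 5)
Your argument is correct, and it is essentially the standard proof of this fact: the paper itself does not prove it but cites it to Kitaev--Shen--Vyalyi, whose derivation likewise reduces to rank-one inputs $\vec(P)\vec(Q)^{\ast}$, applies the unitary variational formula for the trace norm, and identifies the resulting overlaps as overlaps of purifications of $\Psi_A(\varrho)$ and $\Psi_B(\zeta)$ so that Uhlmann's theorem gives both inequalities. The dimension point you flag is handled exactly as you suggest, since $\vec(A\sqrt{\varrho})$ and $\vec(B\sqrt{\zeta})$ already exhibit purifications inside $\Z\otimes(\Y\otimes\X)$, so the unitary freedom of purifications on that factor suffices for tightness.
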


\section{Conversion of the QCD Problems to Equilibrium Value Problems}
\label{sec:conversion}
%

\begin{definition}[Quantum Circuit Distinguishability]
For any constant $a,b$, such that $0\leq b<a\leq 2$. We define a
promise problem $\class{QCD}_{a,b}$ as follows. Given the
description of any two mixed-state quantum circuits $\mathcal{Q}_0$
and $\mathcal{Q}_1$, which are admissible quantum channels from
$\lin{\X}$ to $\lin{\Y}$, exactly one of the following conditions
will hold:
\begin{enumerate}
  \item $\snorm{\mathcal{Q}_0-\mathcal{Q}_1}_{\diamond} \geq a$
  \item $\snorm{\mathcal{Q}_0-\mathcal{Q}_1}_{\diamond} \leq b$
\end{enumerate}
$\class{QCD}_{a,b}$ will \emph{accept} on the condition (1) and
\emph{reject} otherwise.
\end{definition}

It was proved by Rosgen \emph{et al.}~\cite{RosgenW05} that for any
constant $0<\varepsilon<1$, $\class{QCD}_{2-\varepsilon,
\varepsilon}$ is QIP-Complete. A careful reformulation of this
problem will enable us to rephrase this promise problem in term of
an equilibrium value problem.

Assume there exists some space $\Z \otimes \Q$ that will be
constructed later, we define a min-max value $\equil{\lambda}(\Xi)$
to be $\min_{\rho \in \density{\X_0 \otimes \X_1}} \max_{\Pi \in
\Gamma} \ip{\Pi}{\Xi(\rho)}$ where $\Xi$ is a linear super operator
mapping from $\lin{\X_0 \otimes \X_1}$ to $\lin{\Z \otimes \Q}$ and
$\Gamma=\{\Pi: 0\leq \Pi \leq \I_{\Z\otimes \Q}\}$. The $\X_0, \X_1$
in the above definition are isomorphic copies of $\X$.  Further
investigation shows the value $\equil{\lambda}(\Xi)$ is also an
\emph{equilibrium value}.

Given $\density{\X_0 \otimes \X_1}$ and $\Gamma$ are convex and
compact sets and the $\ip{\Pi}{\Xi(\rho)}$ is a bilinear function
over them, it follows from the well-known extensions of von'
Neumann's Min-Max Theorem ~\cite{vN28,Fan53} that
\begin{equation}\label{eqn:minmax_thm}
\equil{\lambda}(\Xi)=\min_{\rho \in \density{\X_0 \otimes \X_1}}
\max_{\Pi \in \Gamma} \ip{\Pi}{\Xi(\rho)}= \max_{\Pi \in \Gamma}
\min_{\rho \in \density{\X_0 \otimes \X_1}} \ip{\Pi}{\Xi(\rho)}
\end{equation}

The equilibrium value $\equil{\lambda}(\Xi)$ is the quantity
represented by the two sides of the above equation. Furthermore, any
pair $(\equil{\rho},\equil{\Pi})$ which makes the function reach the
equilibrium value is called the \emph{equilibrium point}; or,
equivalently, that
\[
\min_{\rho \in \density{\X}} \ip{\equil{\Pi}}{\Xi(\rho)} =
\ip{\equil{\Pi}}{\Xi(\equil{\rho})}=\max_{\Pi \in \Gamma}
\ip{\Pi}{\Xi(\equil{\rho})}
\]
The existence of the equilibrium point follows easily from Equation
[\ref{eqn:minmax_thm}]. Careful readers might notice the equilibrium
value's form defined in this paper is related but slightly different
from the one defined in the proof of
\class{QRG(1)}$\subset$\class{PSPACE}~\cite{JainW09}. In the
latter's definition $\Gamma$ is the set of the density operators.
Thus, the equilibrium value will be the largest eigenvalue or
$\mathcal{L}_\infty$ norm in some sense. However, our definition of
$\Gamma$ makes the equilibrium value be the summation of all
positive eigenvalues. Moreover, in the situation of later discussion
in this paper, the equilibrium value turns to be half the
$\mathcal{L}_1$ norm.

Our main theorem of this part says the two promises of any QCD
problem, or equivalently of any diamond norm of the difference of
two admissible channels, can be transferred to the two promises of
the value of $\equil{\lambda}(\Xi)$ where $\Xi$ can be constructed
efficiently from the input to that QCD problem. Precisely,

\begin{theorem}\label{thm:conversion}
For any instance of the $\class{QCD}_{a,b}$ problem, there exist
some space $\Z \otimes \Q$ and a linear super operator $\Xi$ from
$\lin{\X_0 \otimes \X_1}$ to $\lin{\Z \otimes \Q}$ where the space
$\X_0, \X_1$ are isomorphic copies of the space $\X$ such that
\[
\left\{
  \begin{array}{ll}
  \equil{\lambda}(\Xi)\leq \frac{\sqrt{4-a^2}}{2}, & \hbox{$\snorm{\mathcal{Q}_0-\mathcal{Q}_1}_{\diamond} \geq a$;} \\
  \equil{\lambda}(\Xi)\geq \frac{2-b}{2} , & \hbox{$\snorm{\mathcal{Q}_0-\mathcal{Q}_1}_{\diamond} \leq b$.}
  \end{array}
\right.
\]
where $\equil{\lambda}(\Xi)$ is the equilibrium value defined above.
Further more, such a super operator $\Xi$ can be constructed
efficiently in parallel from the input to the $\class{QCD}_{a,b}$
problem.
\end{theorem}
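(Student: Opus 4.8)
The plan is to reduce everything to Fact~\ref{fact:fmax}, which already rewrites the diamond norm $\snorm{\Q_0 - \Q_1}_\diamond$ as a maximum fidelity $\fid_{\max}(\Psi_A, \Psi_B)$ for the Stinespring-type dilation of the Hermitian-preserving map $\Psi = \Q_0 - \Q_1$. So the first step is to write $\Q_0(X) = \tr_{\Z_0} A X A^\ast$ and $\Q_1(X) = \tr_{\Z_1} B X B^\ast$ with $A, B$ isometries, combine them into a single map $\Psi = \Q_0 - \Q_1$ with a two-operator Stinespring form $\Psi(X) = \tr_{\Z} (A' X (B')^\ast)$ (padding $\Z_0, \Z_1$ into a common $\Z$ and using an extra qubit flag to encode the sign), and then read off the two channels $\Psi_{A'}, \Psi_{B'}$ from the fact. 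At this point $\snorm{\Q_0-\Q_1}_\diamond = \max\{\fid(\Psi_{A'}(\varrho), \Psi_{B'}(\zeta)) : \varrho, \zeta \in \density{\X}\}$, and all of the data $A', B', \Z$ are computable by \class{NC} circuits from the gate descriptions of $\Q_0, \Q_1$.

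The second step is the heart of the matter: converting a \emph{maximum} of fidelities into the \emph{equilibrium} (min–max) value $\equil{\lambda}(\Xi)$. The trick I expect is to use the standard characterization of fidelity as a maximal overlap of purifications together with the observation that a maximization over purifications of $\Psi_{A'}(\varrho)$ can be absorbed into the isometry structure, so that maximizing $\fid$ over the pair $(\varrho, \zeta)$ becomes maximizing an overlap $\ip{\Pi}{\Xi(\rho)}$ over a joint state $\rho \in \density{\X_0 \otimes \X_1}$ (the two isomorphic copies $\X_0, \X_1$ carry the $\varrho$ and $\zeta$ registers) and a measurement operator $\Pi \in \Gamma = \{\Pi : 0 \le \Pi \le \I_{\Z \otimes \Q}\}$; here $\Xi$ is the \class{NC}-constructible super-operator that prepares the purifications and $\Q$ is a purifying register. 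The subtlety — and what will need care — is that $\fid_{\max}$ is a maximum but $\equil{\lambda}$ is a min–max, so the conversion cannot be an identity; instead one shows the \emph{complementary} relation $\equil{\lambda}(\Xi) = \text{(something like)}\ \min_\rho \max_\Pi \ip{\Pi}{\Xi(\rho)}$ equals a quantity that is \emph{small} exactly when $\fid_{\max}$ is large and vice versa. Concretely I expect an identity of the shape $\equil{\lambda}(\Xi) = 1 - \tfrac12(\text{gap})$ or, matching the stated bounds, a relation tying $\equil{\lambda}(\Xi)$ to $\sqrt{1 - \fid_{\max}^2/4}$-type expressions; the $\min$ over $\rho$ appears because an adversary choosing the input state is exactly what forces the $\fid_{\max}$ to become a worst-case-over-inputs quantity after the reformulation, and the von Neumann min–max equality \eqref{eqn:minmax_thm} lets us treat the result as a genuine equilibrium value.

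The third step is the quantitative translation of the promises. Given $\snorm{\Q_0 - \Q_1}_\diamond \ge a$, i.e. $\fid_{\max} \ge a$ — wait, the fidelity and diamond norm are related through Fact~\ref{fact:fmax} as an equality, so $\fid_{\max}(\Psi_A,\Psi_B) = \snorm{\Q_0-\Q_1}_\diamond$; then plug $\fid_{\max} \ge a$ into the identity from step two and use monotonicity to get $\equil{\lambda}(\Xi) \le \sqrt{4-a^2}/2$, and similarly $\fid_{\max} \le b$ gives $\equil{\lambda}(\Xi) \ge (2-b)/2$. This is the routine part, essentially verifying that $t \mapsto \sqrt{4-t^2}/2$ is decreasing on $[0,2]$ and that $t \mapsto (2-t)/2$ is decreasing and that the two curves sit on the correct sides of the exact relation for all $t \in [b,a]$. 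Finally, the \class{NC}-constructibility claim is discharged by noting that each ingredient — Stinespring dilation of a circuit (a product of the gate isometries), partial traces, tensoring with identities, and the fixed bookkeeping isometries that assemble $\Xi$ — is a fixed sequence of matrix operations of size polynomial in the (exponential) dimension, hence in \class{NC} relative to that dimension, i.e. \class{NC(poly)} in the input length; I would state this and defer the numerical-precision bookkeeping to the appendix as the introduction promises. The main obstacle I anticipate is getting the exact algebraic identity in step two right — in particular pinning down the register $\Q$ and the operator $\Xi$ so that the purification-overlap really does reproduce $\fid_{\max}$ on one side while yielding a clean closed form like $\sqrt{4 - \fid_{\max}^2}/2$ on the other.
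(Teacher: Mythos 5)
Your first step (the two-operator Stinespring form with a flag qubit, followed by Fact~\ref{fact:fmax}) and your closing remarks on the quantitative translation and \class{NC}-constructibility do match the paper. But the step you yourself call ``the heart of the matter'' is exactly where the proposal stops short, and the mechanism you guess at --- Uhlmann-type purification overlaps absorbed into the isometry structure --- is not the one that works. In the paper, $\Xi$ is simply the difference $\widetilde{\Phi_A}-\widetilde{\Phi_B}$ of the two \emph{complementary} channels supplied by Fact~\ref{fact:fmax} (each precomposed with a partial trace so that a single $\rho\in\density{\X_0\otimes\X_1}$ carries both inputs, $\varrho=\tr_{\X_1}\rho$ and $\zeta=\tr_{\X_0}\rho$; taking $\rho=\varrho\otimes\zeta$ shows this restriction loses nothing). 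The inner maximization over $\Gamma=\{\Pi: 0\le\Pi\le\I\}$ then has a closed form: for a traceless Hermitian operator $X$ one has $\max_{\Pi\in\Gamma}\ip{\Pi}{X}=\frac{1}{2}\snorm{X}_1$, so
\[
\equil{\lambda}(\Xi)=\min_{\rho\in\density{\X_0\otimes\X_1}}\tfrac{1}{2}\snorm{\widetilde{\Phi_A}(\rho)-\widetilde{\Phi_B}(\rho)}_1 .
\]
No exact identity with $\fid_{\max}$ is needed or available; the two different functional forms in the statement come from the two one-sided Fuchs--van de Graaf inequalities $1-\fid\le\frac{1}{2}\snorm{\cdot}_1\le\sqrt{1-\fid^2}$, applied pointwise in $\rho$ (with the fidelity here equal to $\snorm{\Phi}_\diamond/2$ after normalization). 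If $\snorm{\Phi}_\diamond\ge a$, then \emph{some} $\rho$ attains fidelity at least $a/2$, and the upper bound at that particular $\rho$ forces the minimum to be at most $\sqrt{4-a^2}/2$; if $\snorm{\Phi}_\diamond\le b$, then \emph{every} $\rho$ has fidelity at most $b/2$, and the lower bound gives $(2-b)/2$ uniformly, hence for the minimum as well.

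Without identifying $\Xi$ as this channel difference and without the half-trace-norm characterization of the $\Gamma$-maximum, your step two has no concrete content to verify: the ``complementary relation'' you postulate is precisely Fuchs--van de Graaf, which you never invoke, and your heuristic that the adversarial choice of input state turns $\fid_{\max}$ into a worst-case quantity conflates the $\max$ over $\rho$ inside $\fid_{\max}$ with the $\min$ over $\rho$ inside $\equil{\lambda}$ --- the passage between them works only because the Fuchs--van de Graaf bounds hold for each fixed $\rho$ separately, in the asymmetric way just described. This is a genuine gap in the central step rather than an alternative route; the surrounding scaffolding is sound.
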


Before we get into the proof of the theorem, it might be helpful to
see where this theorem leads us to. Since for any $0 <\varepsilon
<1$ the $\class{QCD}_{2-\varepsilon,\varepsilon}$ is QIP-Complete,
we can choose a constant $\varepsilon'$ such that
$\equil{\lambda}(\Xi)$ is either at least $\frac{2-\varepsilon'}{2}$
or at most $\frac{\sqrt{4\varepsilon'-\varepsilon'^2}}{2}$ where
$\frac{2-\varepsilon'}{2} \geq
\frac{\sqrt{4\varepsilon'-\varepsilon'^2}}{2}$ and there is a
constant gap between the two promises. For example, if we choose
$\varepsilon=0.1$, then the two promises become
\[
  \text{either } \equil{\lambda}(\Xi) \geq 0.95 \text{ (namely, }  \snorm{\Phi}_\diamond\leq 0.1) \text{ or }
  \equil{\lambda}(\Xi) \leq 0.32 \text{ (namely, } \snorm{\Phi}_\diamond\geq 1.9)
\]
Thus, in order to simulate QIP, it suffices to compute
$\equil{\lambda}(\Xi)$ approximately to distinguish between the two
promises. This accomplishes the conversion we need for the next step
of the whole proof. A simple proof for
\emph{Theorem~\ref{thm:conversion}} is available below.

\begin{proof}
For any instance of $\class{QCD}_{a,b}$, we are given the classical
descriptions of two mixed-state quantum circuits $\mathcal{Q}_0$ and
$\mathcal{Q}_1$, which are admissable quantum channels from
$\lin{\X}$ to $\lin{\Y}$. Thus, we could describe the two circuits
using the Stinespring representation of quantum channels.
Precisely,\[
  \mathcal{Q}_0(X) = \tr_{\Z} (A_0 X
  A_0^{\ast}), \quad
  \mathcal{Q}_1(X) = \tr_{\Z} (A_1 X A_1^{\ast})
\]
where $\Z$ is the auxiliary space and $A_0,A_1 \in \lin{\X, \Y
\otimes \Z}$ are linear isometries.  Then, we have,
\[
 \Phi(X) \defeq{} \mathcal{Q}_0 (X)-\mathcal{Q}_1 (X) =\tr_{\Z} (A_0 X A_0^{\ast}-A_1
 X A_1^{\ast}) =\tr_{\Z \otimes \Q} (2C_0 X C_1^{\ast})
\]
where $\Q$ is a complex Euclidean space of dimension $2$ and $C_0,
C_1 \in \lin{\X, \Y \otimes \Z \otimes \Q}$. Moreover,
\[
 C_0=\frac{1}{\sqrt{2}}\left(
       \begin{array}{c}
         A_0 \\
         A_1 \\
       \end{array}
     \right),\quad \quad
 C_1^{\ast} =\frac{1}{\sqrt{2}}\left(
        \begin{array}{cc}
          A_0 & -A_1 \\
        \end{array}
      \right)
\]
It is easy to see that $C_0^{\ast}C_0=C_1^{\ast}C_1=\I_{\X}$ given
that $A_0^{\ast}A_0=A_1^{\ast}A_1=\I_{\X}$. To compute the diamond
norm of $\Phi$, we define
\[
  \Phi_{A}(X)  = \tr_{\Y} (C_0 X C_0^{\ast}), \quad
  \Phi_{B}(X)  = \tr_{\Y} (C_1 X C_1^{\ast}) \\
\]
Due to Fact~\ref{fact:fmax} , we have
\begin{equation}\label{eqn:diamond_one}
\snorm{\Phi}_{\diamond}=\fid_{\max}(2\Phi_{A}, 2\Phi_{B})
\end{equation}
 It is interesting and useful to see that we can use one density operator
$\rho \in \density{\X_0 \otimes \X_1}$ to represent $\varrho, \zeta
\in \density{\X}$ where the space $\X_0, \X_1$ are isomorphic copies
of the space $\X$.

Precisely, let
\[
\widetilde{\Phi_{A}}(X)=\Phi_A(\tr_{\X_1}(X)), \quad
\widetilde{\Phi_{B}}(X)=\Phi_B(\tr_{\X_0}(X))
\]

It is obvious that $\widetilde{\Phi_{A}},\widetilde{\Phi_{B}}$ are
admissible quantum channels from $\lin{\X_0 \otimes \X_1}$ to
$\lin{\Z \otimes \Q}$. Define
\begin{equation}\label{eqn:diamond_two}
 {\widetilde{\fid_{\max}}(2\widetilde{\Phi_{A}}, 2\widetilde{\Phi_{B})}}=\max\{ \fid(2\widetilde{\Phi_{A}(\rho)}, 2\widetilde{\Phi_{B}(\rho)}):\rho \in\density{\X_0 \otimes \X_1} \}
\end{equation}

By taking $\rho=\varrho \otimes \zeta$ in the Equation
[\ref{eqn:diamond_two}], we have
${\widetilde{\fid_{\max}}(2\widetilde{\Phi_{A}},
2\widetilde{\Phi_{B})}} \geq \fid_{\max}(2\Phi_{A}, 2\Phi_{B})$ . To
see the reverse side, we can take $\varrho=\tr_{\X_1}\rho$ and
$\zeta=\tr_{\X_0} \rho$ in Equation [\ref{eqn:diamond_one}]. Thus,
we have ${\widetilde{\fid_{\max}}(2\widetilde{\Phi_{A}},
2\widetilde{\Phi_{B})}}=\fid_{\max}(2\Phi_{A}, 2\Phi_{B})$. Namely,
\[
  \snorm{\Phi}_{\diamond}= \max\{
  \fid (2\widetilde{\Phi_{A}(\rho)},
2\widetilde{\Phi_{B}(\rho)}):
  \rho \in \density{\X_0 \otimes \X_1} \}
\]

Since in QCD problem we have the promise that either
$\snorm{\Phi}_{\diamond}\geq a$ or $\snorm{\Phi}_{\diamond}\leq b$.
Due to the Fuchs-van de Graaf Inequalities, for any $\varrho, \zeta
\in \density{\X}$ ,
\begin{equation} \label{eqn:fuchs_inequality}
 1-\frac{1}{2}\snorm{\varrho-\zeta}_1 \leq \fid(\varrho,\zeta) \leq
 \sqrt{1-\frac{1}{4}\snorm{\varrho-\zeta}_1^2}
\end{equation}
and let $\Gamma=\{ \Pi: 0\leq \Pi \leq \I_{\Z \otimes \Q} \}$. By
substituting $\fid (\widetilde{\Phi_{A}}, \widetilde{\Phi_{B}})$
into Eq [\ref{eqn:fuchs_inequality}] and make use of the fact $\fid
(2\widetilde{\Phi_{A}}, 2\widetilde{\Phi_{B}})=2\fid
(\widetilde{\Phi_{A}}, \widetilde{\Phi_{B}})$, then we have when
$\snorm{\Phi}_{\diamond}\geq a$,
\[
  \min_{\rho \in \density{\X_0 \otimes \X_1}} \max_{\Pi \in
  \Gamma}
  \ip{\Pi}{\widetilde{\Phi_{A}}(\rho)-\widetilde{\Phi_{B}}(\rho)} =
\min_{\rho \in \density{\X_0 \otimes \X_1}}  \frac{1}{2}
  \snorm{\widetilde{\Phi_{A}}(\rho)-\widetilde{\Phi_{B}}(\rho)}_1 \leq
   \frac{\sqrt{4-a^2}}{2}
\]
and when $\snorm{\Phi}_{\diamond}\leq b$,
\[
 \min_{\rho \in \density{\X_0 \otimes \X_1}} \max_{\Pi \in
  \Gamma}
  \ip{\Pi}{\widetilde{\Phi_{A}}(\rho)-\widetilde{\Phi_{B}}(\rho)} =
\min_{\rho \in \density{\X_0 \otimes \X_1}}  \frac{1}{2}
  \snorm{\widetilde{\Phi_{A}}(\rho)-\widetilde{\Phi_{B}}(\rho)}_1
  \geq \frac{2-b}{2}
\]
Let $\Xi=\widetilde{\Phi_{A}}-\widetilde{\Phi_{B}}$ and
$\equil{\lambda}(\Xi)$ be the equilibrium value defined before.
Finally, we have
\[
\left\{
  \begin{array}{ll}
  \equil{\lambda}(\Xi)\leq \frac{\sqrt{4-a^2}}{2}, & \hbox{$\snorm{\mathcal{Q}_0-\mathcal{Q}_1}_{\diamond} \geq a$;} \\
  \equil{\lambda}(\Xi)\geq \frac{2-b}{2} , & \hbox{$\snorm{\mathcal{Q}_0-\mathcal{Q}_1}_{\diamond} \leq b$.}
  \end{array}
\right.
\]
As we can see through the proof, the desired super operator $\Xi$ is
constructed explicitly from the input circuits $\Q_0,\Q_1$.
Moreover, every step in the construction only involves fundamental
operation of matrices. Due to the facts in Section~\ref{sec:NC}, we
are able to construct such $\Xi$ efficiently in parallel.
\end{proof}

\section{Multiplicative Weights Update method for Computing Equilibrium Values}
\label{sec:MMW_PSPACE}
The \emph{multiplicative weights update method} introduced in
Section \ref{sec:introduction} is a framework for algorithm design
(or meta-algorithm) that works as the one shown in
Fig~\ref{fig:mmw}. This kind of framework involves lots of technical
details and we refer the curious reader to the survey and the PhD
thesis mentioned in the introduction. However, for the sake of
completeness, we provide the main result which will be useful in our
proof. It should be noticed that $\{M^{(t)}\}$ is the freedom we
have in this framework.

\begin{theorem} \label{thm:mmw_main}
After $T$ rounds, the algorithm in Fig~\ref{fig:mmw} guarantees
that, for any $\rho^{\ast} \in \density{\X}$, we have
\begin{equation}\label{eqn:mmw_main}
    (1-\epsilon)\sum_{\geq 0} \ip{\rho^{(t)}}{M^{(t)}}+(1+\epsilon)\sum_{\leq 0}
    \ip{\rho^{(t)}}{M^{(t)}} \leq \ip{\rho^{\ast}}{\sum_{t=1}^T
    M^{(t)}} + \frac{lnN}{\epsilon}
\end{equation}
Here, the subscripts $\geq 0$ or $\leq 0$ in the summation refer to
the rounds $t$ where $0\leq M^{(t)} \leq \I$ or $-\I \leq M^{(t)}
\leq 0$ respectively.
\end{theorem}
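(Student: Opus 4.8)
The plan is to run the textbook matrix–multiplicative–weights potential argument; the only ingredient that is not pure bookkeeping is the Golden–Thompson trace inequality. Recall from Fig~\ref{fig:mmw} that the algorithm maintains weight operators $W^{(t)}\in\pos{\X}$ with $W^{(1)}=\I_{\X}$ and $W^{(t+1)}=\exp\bigl(-\epsilon\sum_{\tau=1}^{t}M^{(\tau)}\bigr)$ (the matrix analogue of the scalar rule $w^{(t+1)}=w^{(t)}e^{-\epsilon M^{(t)}}$), and outputs the states $\rho^{(t)}=W^{(t)}/\tr(W^{(t)})$. Introduce the potential $\Phi^{(t)}=\tr(W^{(t)})$, so that $\Phi^{(1)}=\tr(\I_{\X})=N$ and $\ip{\rho^{(t)}}{M^{(t)}}=\tr(W^{(t)}M^{(t)})/\Phi^{(t)}$.

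First I would bound the one-step growth of the potential. Since $\log W^{(t+1)}=\log W^{(t)}-\epsilon M^{(t)}$ (all operators here being Hermitian), the Golden–Thompson inequality $\tr(e^{A+B})\le\tr(e^{A}e^{B})$ gives $\Phi^{(t+1)}\le\tr\bigl(W^{(t)}\exp(-\epsilon M^{(t)})\bigr)$. Applying the elementary scalar estimate $e^{-x}\le 1-x+x^{2}$, valid for $\abs{x}\le1$, to the eigenvalues of $\epsilon M^{(t)}$ — legitimate because the framework keeps $\norm{M^{(t)}}\le1$ and $\epsilon\le1$ — yields the operator inequality $\exp(-\epsilon M^{(t)})\preceq\I-\epsilon M^{(t)}+\epsilon^{2}(M^{(t)})^{2}$. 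In a round with $0\preceq M^{(t)}\preceq\I$ one has $(M^{(t)})^{2}\preceq M^{(t)}$, and in a round with $-\I\preceq M^{(t)}\preceq0$ one has $(M^{(t)})^{2}\preceq -M^{(t)}$; hence in the two cases respectively $\exp(-\epsilon M^{(t)})\preceq\I-\epsilon(1\mp\epsilon)M^{(t)}$. Multiplying on the left by $W^{(t)}\succeq0$, taking the trace, and using $1+u\le e^{u}$ gives
\[
\Phi^{(t+1)}\ \le\ \Phi^{(t)}\bigl(1-\epsilon(1\mp\epsilon)\ip{\rho^{(t)}}{M^{(t)}}\bigr)\ \le\ \Phi^{(t)}\exp\bigl(-\epsilon(1\mp\epsilon)\ip{\rho^{(t)}}{M^{(t)}}\bigr),
\]
with the upper sign in rounds where $0\preceq M^{(t)}\preceq\I$ and the lower sign in rounds where $-\I\preceq M^{(t)}\preceq0$.

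Next I would telescope this over $t=1,\dots,T$ and use $\Phi^{(1)}=N$ to obtain
\[
\Phi^{(T+1)}\ \le\ N\exp\!\Bigl(-\epsilon(1-\epsilon)\!\sum_{\geq 0}\ip{\rho^{(t)}}{M^{(t)}}-\epsilon(1+\epsilon)\!\sum_{\leq 0}\ip{\rho^{(t)}}{M^{(t)}}\Bigr).
\]
For the matching lower bound, since $W^{(T+1)}=\exp\bigl(-\epsilon\sum_{t=1}^{T}M^{(t)}\bigr)\succeq0$ we have $\Phi^{(T+1)}\ge\lambda_{1}\bigl(W^{(T+1)}\bigr)=\exp\bigl(-\epsilon\,\lambda_{\min}(\sum_{t=1}^{T}M^{(t)})\bigr)$, and for every $\rho^{\ast}\in\density{\X}$ one has $\lambda_{\min}\bigl(\sum_{t=1}^{T}M^{(t)}\bigr)\le\ip{\rho^{\ast}}{\sum_{t=1}^{T}M^{(t)}}$, so $\Phi^{(T+1)}\ge\exp\bigl(-\epsilon\ip{\rho^{\ast}}{\sum_{t=1}^{T}M^{(t)}}\bigr)$. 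Chaining the two bounds on $\Phi^{(T+1)}$, taking logarithms, dividing by $\epsilon>0$, and moving $\ip{\rho^{\ast}}{\sum_{t}M^{(t)}}$ to the right-hand side gives exactly~(\ref{eqn:mmw_main}).

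I expect the only place that needs care is the separation of the two sign cases so that the ``loss'' rounds ($0\preceq M^{(t)}\preceq\I$) contribute the factor $(1-\epsilon)$ while the ``gain'' rounds ($-\I\preceq M^{(t)}\preceq0$) contribute $(1+\epsilon)$; this rests solely on the elementary operator facts $M^{2}\preceq M$ on $[0,\I]$ and $M^{2}\preceq -M$ on $[-\I,0]$. The use of Golden–Thompson (a naive ``product'' update bound fails because the $M^{(t)}$ need not commute) together with the scalar estimates $e^{-x}\le1-x+x^{2}$ and $1+u\le e^{u}$ is entirely routine, as is carrying along the standing hypotheses $\norm{M^{(t)}}\le1$ and $\epsilon\le1$ of the framework.
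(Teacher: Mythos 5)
Your proposal is correct and follows essentially the same route as the paper's appendix proof of Theorem~\ref{thm:mmw_simple}: a potential argument on $\tr(W^{(t)})$ via Golden--Thompson, an operator inequality bounding $\exp(-\epsilon M^{(t)})$, telescoping, and a lower bound on $\tr(W^{(T+1)})$ through the minimum eigenvalue of $\sum_t M^{(t)}$. The only differences are cosmetic: the paper explicitly writes out only the case $0\preceq M^{(t)}\preceq\I$ (remarking that the general two-sided version is proved the same way, which your $M^2\preceq\pm M$ case split supplies), and it uses the estimate $\exp(-\varepsilon M)\preceq\I-(1-e^{-\varepsilon})M$ together with $1-e^{-\varepsilon}\geq\varepsilon(1-\varepsilon)$ where you use $e^{-x}\leq1-x+x^{2}$; both yield the same $(1\mp\epsilon)$ factors.
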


Since in our consideration, it always holds that $0\leq M^{(t)} \leq
\I$. It suffices for our purpose to prove a simpler version of the
theorem ~\ref{thm:mmw_main} although the proof is almost the same as
the one for the general version.

\begin{figure}[t]
\noindent\hrulefill
\begin{mylist}{8mm}
\item[1.] Initialization: Pick a fixed $\varepsilon\leq \frac{1}{2}$,
and let $W^{(1)}=\I_{\X} \in \lin{\X}$, $N=\dim{\X}$.
\item[2.]
Repeat for each $t = 1,\ldots,T$:
\begin{mylist}{8mm}
\item[(a)] Let the density operator
$\rho^{(t)}=W^{(t)}/\tr{W^{(t)}}$
\item[(b)]  Observe the loss matrix $M^{(t)} \in \lin{\X}$ which satisfies $-\I_\X \leq M^{(t)}\leq 0$ or $0\leq M^{(t)}\leq \I_\X$,
 update the weight matrix as follows:
\[
   W^{(t+1)}=exp(-\varepsilon \sum_{\tau=1}^t M^{(\tau)})
\]
\end{mylist}
\end{mylist}
\noindent\hrulefill \caption{The Matrix Multiplicative Weights
Update method.} \label{fig:mmw}
\end{figure}

\begin{theorem} \label{thm:mmw_simple}
Assume $0\leq M^{(t)} \leq \I$ for all $t$, after $T$ rounds, the
algorithm in Fig~\ref{fig:mmw} guarantees that, for any $\rho^{\ast}
\in \density{\X}$, we have
\begin{equation}\label{eqn:mmw_simple}
     (1-\epsilon)\sum_{t=1}^T
    \ip{\rho^{(t)}}{M^{(t)}} \leq \ip{\rho^{\ast}}{\sum_{t=1}^T
    M^{(t)}} + \frac{lnN}{\epsilon}
\end{equation}
\end{theorem}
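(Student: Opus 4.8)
The plan is to track the quantity $\tr W^{(t+1)}$ and bound it from above and below, just as in the scalar multiplicative-weights analysis, but using operator-monotone/operator-convexity facts to handle the noncommutativity. First I would note that $W^{(t+1)} = \exp\bigl(-\varepsilon \sum_{\tau=1}^{t} M^{(\tau)}\bigr)$, and use the Golden--Thompson inequality $\tr \exp(P+Q) \le \tr\bigl(\exp(P)\exp(Q)\bigr)$ to peel off the last term:
\[
\tr W^{(t+1)} = \tr \exp\!\Bigl(-\varepsilon\sum_{\tau=1}^{t-1} M^{(\tau)} - \varepsilon M^{(t)}\Bigr) \le \tr\bigl(W^{(t)} \exp(-\varepsilon M^{(t)})\bigr).
\]
Since $0 \le M^{(t)} \le \I$ and $\varepsilon \le \tfrac12$, the operator inequality $\exp(-\varepsilon M) \preceq \I - (1-\varepsilon\cdot\tfrac12\cdots) \varepsilon M$ — more precisely $\exp(-\varepsilon M) \preceq \I - \varepsilon M + \tfrac{\varepsilon^2}{2}M^2 \preceq \I - (1-\varepsilon/2)\varepsilon M$ isn't quite the cleanest; the standard bound I would use is $e^{-\varepsilon x} \le 1 - \varepsilon x + \varepsilon^2 x^2$ for appropriate range, or simply $e^{-\varepsilon x} \le 1 - (1-\varepsilon) \varepsilon x$ for $x\in[0,1]$, $\varepsilon\le\tfrac12$ — applied in the operator sense because $M^{(t)}$ has spectrum in $[0,1]$. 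This gives
\[
\tr W^{(t+1)} \le \tr\bigl(W^{(t)}(\I - (1-\varepsilon)\varepsilon M^{(t)})\bigr) = \tr W^{(t)}\bigl(1 - (1-\varepsilon)\varepsilon \ip{\rho^{(t)}}{M^{(t)}}\bigr),
\]
using $\rho^{(t)} = W^{(t)}/\tr W^{(t)}$. Taking logarithms and using $\ln(1-z) \le -z$, then summing the telescoping inequality from $t=1$ to $T$ with $\tr W^{(1)} = N$, yields
\[
\ln \tr W^{(T+1)} \le \ln N - (1-\varepsilon)\varepsilon \sum_{t=1}^{T} \ip{\rho^{(t)}}{M^{(t)}}.
\]

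For the lower bound on $\tr W^{(T+1)}$, I would use that for any $\rho^{\ast} \in \density{\X}$, by the variational/Peierls--Bogoliubov-type inequality (or just $\tr \exp(P) \ge \exp(\ip{\rho^{\ast}}{P})$, which follows from Jensen's inequality applied with the eigenbasis of $\rho^\ast$, since $x \mapsto e^x$ is convex),
\[
\tr W^{(T+1)} = \tr\exp\!\Bigl(-\varepsilon \sum_{t=1}^{T} M^{(t)}\Bigr) \ge \exp\!\Bigl(-\varepsilon \ip{\rho^{\ast}}{\textstyle\sum_{t=1}^{T} M^{(t)}}\Bigr),
\]
hence $\ln \tr W^{(T+1)} \ge -\varepsilon \ip{\rho^{\ast}}{\sum_t M^{(t)}}$. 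Combining the two bounds and dividing through by $(1-\varepsilon)\varepsilon$ gives
\[
\sum_{t=1}^{T} \ip{\rho^{(t)}}{M^{(t)}} \le \frac{1}{1-\varepsilon}\ip{\rho^{\ast}}{\textstyle\sum_t M^{(t)}} + \frac{\ln N}{(1-\varepsilon)\varepsilon},
\]
and multiplying by $(1-\varepsilon)$ (and absorbing the harmless $(1-\varepsilon)\le 1$ factor on the $\ln N/\varepsilon$ term) recovers exactly \eqref{eqn:mmw_simple}. I would double-check the constants so that the scalar inequality $e^{-\varepsilon x}\le 1-(1-\varepsilon)\varepsilon x$ is actually valid for $x\in[0,1]$ and $\varepsilon\le\tfrac12$; if it is off by a constant I would instead use $e^{-\varepsilon x}\le 1-\varepsilon x+\varepsilon^2 x^2\le 1-\varepsilon x(1-\varepsilon)$ on $x\in[0,1]$, which does the job.

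The main obstacle is purely the noncommutativity: the last-term peeling step genuinely needs Golden--Thompson rather than the naive $\exp(P+Q)=\exp(P)\exp(Q)$, and the step $\exp(-\varepsilon M^{(t)}) \preceq \I - (1-\varepsilon)\varepsilon M^{(t)}$ must be justified as an operator inequality by diagonalizing $M^{(t)}$ and invoking the scalar inequality eigenvalue-by-eigenvalue (legitimate since both sides are functions of the single operator $M^{(t)}$). Everything else — telescoping, $\ln(1+z)\le z$, and the convexity lower bound on $\tr\exp$ — is routine scalar manipulation once those two operator facts are in place.
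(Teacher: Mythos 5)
Your proposal is correct and follows essentially the same route as the paper's own proof: Golden--Thompson to peel off the last exponent, the operator inequality $\exp(-\varepsilon M)\leq \I-\varepsilon' M$ (the paper takes $\varepsilon'=1-e^{-\varepsilon}$ and only later invokes $\varepsilon'\geq\varepsilon(1-\varepsilon)$, while you fold this into the scalar bound directly, which is valid), telescoping, and the lower bound $\tr\exp(P)\geq e^{\ip{\rho^{\ast}}{P}}$, which the paper phrases via the extreme eigenvalue. The differences are purely cosmetic bookkeeping of constants.
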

We put off the proof in the appendix part. It will be sufficient to
just remember this theorem in the first reading and skip the
details.

\subsection{Algorithm for Computing Equilibrium Values}

Using the multiplicative weight update method to compute some kind
of equilibrium values was known before, for instance the equilibrium
value of zero-sum game (an algorithm to compute this value can be
found in Kale's thesis~\cite{Kale07} and the
survey~\cite{AroraHK05b}. The reference for similar algorithms with
different purposes can be found in the survey~\cite{AroraHK05b}).
However, to compute the equilibrium value defined in our form, we
need to adapt the old idea to the new situation.

In order to compute the equilibrium value $\equil{\lambda}(\Xi)$, we
design an algorithm as shown in Fig~\ref{fig:minmax}. This algorithm
takes the descriptions of the two mixed-state quantum circuits as
input, and then compute the
$\Xi=\widetilde{\Phi_{A}}-\widetilde{\Phi_{B}}$ in \emph{Theorem
~\ref{thm:conversion}}, and output the equilibrium value
$\equil{\lambda}(\Xi)$ with precision $\delta$. Namely, the return
value $\lambda$ satisfies $|\lambda-\equil{\lambda}(\Xi)|\leq
\delta$.

Before proving the correctness of the algorithm, one might want to
compare the algorithms in both Fig~\ref{fig:mmw} and
Fig~\ref{fig:minmax}. The only change in our algorithm is that we
propose a way of computing $M^{(t)}$ for each round $t$. As we
mentioned before, $\{M^{(t)}\}$ is the freedom we have in this
framework. Different designs of $\{M^{(t)}\}$ can lead to different
applications of this framework. For instance, the primal-dual
approach for semidefinite programs in Kale's thesis~\cite{Kale07} is
an example of the design of $\{M^{(t)}\}$ that provides a good
application.

\begin{figure}[t]
\noindent\hrulefill
\begin{mylist}{8mm}
\item[1.]
Let $\varepsilon=\frac{\delta}{4}$ and $T=\ceil{\frac{16 \ln
N}{\delta^2}}$. Also let $W^{(1)}=\I_{\X}$, $N=\dim{(\X)}$.
\item[2.]
Repeat for each $t = 1,\ldots,T$:

\begin{mylist}{8mm}
\item[(a)]
Let $\rho^{(t)}=W^{(t)}/\tr{W^{(t)}}$ and compute the
$\Xi(\rho^{(t)})$. Then let $\Pi^{(t)}$ be the projection onto the
positive eigenspaces of $\Xi(\rho^{(t)})$.
\item[(b)]
Let $M^{(t)}=(\Xi^{\ast}(\Pi^{(t)})+\I_\X)/2$, and update the weight
matrix as follows:
\[
   W^{(t+1)}=exp(-\varepsilon \sum_{\tau=1}^t M^{(\tau)})
\]
\end{mylist}

\item[3.]
Return $\frac{1}{T} \sum_{t=1}^T \ip{\Pi^{(t)}}{\Xi(\rho^{(t)})}$ as
the approximation of $\equil{\lambda}(\Xi)$.
\end{mylist}
\noindent\hrulefill \caption{An algorithm that computes the
approximation $\equil{\lambda}(\Xi)$ with precision $\delta$. }
\label{fig:minmax}
\end{figure}

\begin{theorem} \label{thm:algorithm}
Using $T=\ceil{\frac{16 \ln N}{\delta^2}}$ rounds, the algorithm in
Fig~\ref{fig:minmax} returns the approximated value of
$\equil{\lambda}(\Xi)$ with precision $\delta$. Namely, we have the
return value $\lambda$ satisfying
\[
|\lambda-\equil{\lambda}(\Xi)|\leq \delta
\]
\end{theorem}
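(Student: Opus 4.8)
The plan is to show that the quantity $\lambda = \frac{1}{T}\sum_{t=1}^T \sip{\Pi^{(t)}}{\Xi(\rho^{(t)})}$ is sandwiched, up to $\pm\delta$, between the two sides of the minimax equality \eqref{eqn:minmax_thm}. I would argue the two inequalities separately. For the \emph{upper} bound $\lambda \le \equil{\lambda}(\Xi) + \delta$ (more precisely, that $\lambda$ does not greatly exceed the maximin side), observe that for each fixed $t$, $\Pi^{(t)}$ is a particular element of $\Gamma$, so $\min_\rho \sip{\Pi^{(t)}}{\Xi(\rho)} \le \sip{\Pi^{(t)}}{\Xi(\rho^{(t)})}$; but this goes the wrong way, so instead I would use the MMW guarantee. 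Here is the key move: since $M^{(t)} = (\Xi^\ast(\Pi^{(t)}) + \I_\X)/2$ and $0\le \Xi^\ast(\Pi^{(t)}) \le \I$ (because $0\le\Pi^{(t)}\le\I$ and $\Xi^\ast$ is the difference of two channels — one must check $-\I\le\Xi^\ast(\Pi)\le\I$, then the shift-and-halve makes it land in $[0,\I]$, \emph{this normalization is exactly why the $+\I_\X$ and the $/2$ appear}), Theorem~\ref{thm:mmw_simple} applies. Note $\sip{\rho^{(t)}}{M^{(t)}} = \frac12(\sip{\Xi(\rho^{(t)})}{\Pi^{(t)}} + 1)$ using $\sip{\rho^{(t)}}{\Xi^\ast(\Pi^{(t)})} = \sip{\Xi(\rho^{(t)})}{\Pi^{(t)}}$ and $\tr\rho^{(t)}=1$.

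Plugging into \eqref{eqn:mmw_simple}: the left side becomes $(1-\epsilon)\sum_t \frac12(\sip{\Pi^{(t)}}{\Xi(\rho^{(t)})}+1)$, and on the right, for any fixed $\rho^\ast$, $\sip{\rho^\ast}{\sum_t M^{(t)}} = \frac12\sum_t(\sip{\rho^\ast}{\Xi^\ast(\Pi^{(t)})} + 1) = \frac12\sum_t(\sip{\Pi^{(t)}}{\Xi(\rho^\ast)}+1)$. Choosing $\rho^\ast$ to be the equilibrium $\equil\rho$, each $\sip{\Pi^{(t)}}{\Xi(\equil\rho)} \le \max_{\Pi\in\Gamma}\sip{\Pi}{\Xi(\equil\rho)} = \equil\lambda(\Xi)$. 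Dividing through by $T$ and rearranging with $\epsilon = \delta/4$, $T = \lceil 16\ln N/\delta^2\rceil$ (so $\frac{\ln N}{\epsilon T}\le \frac{\delta}{4}$ and the $\epsilon$-loss terms contribute at most $\approx\delta/4$ each, since all the relevant quantities are bounded by $1$), I get $\lambda \le \equil\lambda(\Xi) + \delta$. For the \emph{lower} bound, the reverse inequality $\lambda \ge \equil\lambda(\Xi) - \delta$ comes directly from the definition of $\equil\lambda$ as a \emph{min}: since $\Pi^{(t)}$ is the projector onto the positive eigenspace of $\Xi(\rho^{(t)})$, we have $\sip{\Pi^{(t)}}{\Xi(\rho^{(t)})} = \sum_{\text{positive eigenvalues}} = \max_{\Pi\in\Gamma}\sip{\Pi}{\Xi(\rho^{(t)})} \ge \min_\rho \max_{\Pi}\sip{\Pi}{\Xi(\rho)} = \equil\lambda(\Xi)$; averaging over $t$ gives $\lambda \ge \equil\lambda(\Xi)$ on the nose, with room to spare.

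Combining the two bounds yields $|\lambda - \equil\lambda(\Xi)| \le \delta$. The main obstacle I anticipate is not the MMW bookkeeping — that is routine once the substitution is set up — but rather verifying cleanly that $M^{(t)}$ genuinely satisfies $0 \le M^{(t)} \le \I_\X$, i.e.\ that $\Xi^\ast = \widetilde{\Phi_A}^\ast - \widetilde{\Phi_B}^\ast$ maps $\Gamma$ into operators with operator norm at most $1$. Since $\widetilde{\Phi_A}, \widetilde{\Phi_B}$ are admissible (trace-preserving, completely positive), their adjoints are unital and positive, so for $0\le\Pi\le\I$ we get $0\le\widetilde{\Phi_A}^\ast(\Pi)\le\I$ and likewise for $B$; hence $-\I \le \Xi^\ast(\Pi)\le\I$ and $0\le (\Xi^\ast(\Pi)+\I)/2\le\I$. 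I would also want to double-check the constant arithmetic: that with $\epsilon=\delta/4$ and $T=\lceil 16\ln N/\delta^2\rceil$ the accumulated error — the $\epsilon\sum_t\sip{\rho^{(t)}}{M^{(t)}}/T$ term plus $\ln N/(\epsilon T)$, each at most $\delta/4$ after accounting for the factor $\tfrac12$ — sums to at most $\delta$, which is where the particular choice of these parameters in Fig.~\ref{fig:minmax} is pinned down.
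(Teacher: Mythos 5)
Your proposal is correct and follows essentially the same route as the paper: apply Theorem~\ref{thm:mmw_simple} to $M^{(t)}=(\Xi^{\ast}(\Pi^{(t)})+\I_\X)/2$, take $\rho^{\ast}$ to be the equilibrium point $\equil{\rho}$ to obtain $\lambda\le\equil{\lambda}(\Xi)+\delta$, and use the fact that the positive-eigenspace projection $\Pi^{(t)}$ maximizes $\sip{\Pi}{\Xi(\rho^{(t)})}$ over $\Gamma$ to obtain $\lambda\ge\equil{\lambda}(\Xi)$. The only cosmetic difference is that you verify $0\le M^{(t)}\le\I_\X$ via unitality and positivity of the adjoint channels, whereas the paper argues directly from $\Xi(\rho)$ being a difference of two density operators; both are valid, and your parameter accounting ($2\varepsilon+2\ln N/(\varepsilon T)\le\delta$) matches the paper's.
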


\begin{proof} First note that for any $\Pi^{(t)}$ computed during the process,
\[
   \forall \rho \in \density{\X}, \quad  |\ip{\rho}{\Xi^\ast(\Pi^{(t)})}|\leq
   1
\]
since $\Xi(\rho)$ is the difference between two density operators.
Thus, $M^{(t)}=(\Xi^{\ast}(\Pi^{(t)})+\I_{\X})/2$ satisfies  $0 \leq
M^{(t)} \leq \I_{\X}$.

 Then apply \emph{Theorem~\ref{thm:mmw_simple}}, we have,

\begin{equation}\label{eqn:mmw_main_apply}
 (1-\varepsilon)\sum_{\tau=1}^T \ip{\rho^{(\tau)}}{M^{(\tau)}} \leq
 \ip{\rho^{\ast}}{\sum_{\tau=1}^T M^{(\tau)}}+\frac{\ln N}{\varepsilon}
\end{equation}
for any density operator $\rho^{\ast} \in \density{\X}$. Substitute
$M^{(t)}=(\Xi^{\ast}(\Pi^{(t)})+\I_{\X})/2$ into Eq
[\ref{eqn:mmw_main_apply}] and divide both side by $T$, note that
$\ip{\rho^{(t)}}{M^{(t)}}\leq 1$, then we have
\begin{equation*} 
\frac{1}{T} \sum_{\tau=1}^{T}
\ip{\rho^{(\tau)}}{\Xi^{\ast}(\Pi^{(\tau)})} \leq \frac{1}{T}
\ip{\rho^{\ast}}{\sum_{\tau=1}^T \Xi^{\ast}(\Pi^{(\tau)})} +
2\varepsilon +\frac{2\ln N}{\varepsilon T}
\end{equation*}
By choosing $\varepsilon=\frac{\delta}{4}$ and $T=\ceil{\frac{16 \ln
N}{\delta^2}}$, we have
\begin{equation}\label{eqn:mmw_steptwo}
\lambda = \frac{1}{T} \sum_{\tau=1}^{T}
\ip{\rho^{(\tau)}}{\Xi^{\ast}(\Pi^{(\tau)})} \leq \frac{1}{T}
\ip{\rho^{\ast}}{\sum_{\tau=1}^T \Xi^{\ast}(\Pi^{(\tau)})} + \delta
\end{equation}

In each step, $\Pi^{(t)}$ is returned as the solution to maximize $
  \ip{\Pi^{(t)}}{\Xi(\rho^{(t)})} $
for any fixed $\rho^{(t)}$. Due to the definition of the equilibrium
value in Eq [\ref{eqn:minmax_thm}], the equilibrium value
$\equil{\lambda}(\Xi) \leq \ip{\Pi^{(t)}}{\Xi(\rho^{(t)})}$ for any
returned $\Pi^{(t)}$.  On the other side, choose
$(\equil{\rho},\equil{\Pi})$ to be any \emph{equilibrium point} and
let $\rho^\ast=\equil{\rho}$, then we have
$\ip{\equil{\rho}}{\Xi^{\ast}(\Pi^{(t)})}\leq \equil{\lambda}(\Xi)$.
Using inequality [\ref{eqn:mmw_steptwo}], we have
\begin{equation}\label{eqn:mmw_stepthree}
\equil{\lambda}(\Xi)\leq \lambda \leq \frac{1}{T}
\ip{\equil{\rho}}{\sum_{\tau=1}^T \Xi^{\ast}(\Pi^{(\tau)})} + \delta
\leq \equil{\lambda}(\Xi)+\delta
\end{equation}
which completes the proof
\end{proof}

To distinguish between the two promises in
\emph{Theorem~\ref{thm:conversion}}, we let $\delta=0.2$ and make
use of the approximated equilibrium value returned in the algorithm.
If the value is closer to $0.95$, then it is the case that
$\snorm{\Phi}_\diamond\leq 0.1$. Otherwise, it is the case that
$\snorm{\Phi}_\diamond \geq 1.9$. Thus, we solve the promise QCD
problem in this way.

\subsection{Simulation by bounded-depth Boolean circuits}
\label{sec:NC}
We denote by \class{NC} the class of promise problems computed by
the logarithmic-space uniform Boolean circuits with poly-logarithmic
depth. Furthermore, we denote by \class{NC(poly)} the class of
promise problems computed by the polynomial-space uniform Boolean
circuits with polynomial depth. Since it holds that
\class{NC(poly)}=\class{PSPACE}, thus in order to simulate the
algorithm above in PSPACE, it suffices to prove that we can simulate
the algorithm in \class{NC(poly)}.

There are a few facts about these classes which are useful in our
discussion. The first fact is the functions in these classes compose
nicely. It is clear that if $f \in$ \class{NC(poly)} and $g \in$
\class{NC}, then their composition $g\circ f$ is in
\class{NC(poly)}, which follows from the most obvious way of
composing the families of circuits. Another useful fact is that many
computations involving matrices can be performed by NC algorithms
(Please refer to the survey~\cite{vzGathen93} which describes NC
algorithms for these tasks). Especially, we will make use of the
fact that matrix exponentials and positive eigenspace projections
can be approximated to high precision in NC. A more careful
treatment on those issues can be found in
Appendix~\ref{sec:precision_issue}.

Since we are able to perform matrix operations with sufficient
accuracy in \class{NC}, it remains to show the ability to compose
all the operations in \class{NC(poly)} and thus in \class{PSPACE}.
Precisely,

\begin{theorem} \label{thm:NC_simulation}
The algorithm shown above can solve QCD problems in
\class{NC(poly)}, and thus in \class{PSPACE}.
\end{theorem}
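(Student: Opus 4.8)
The plan is to check, round by round, that the algorithm of Figure~\ref{fig:minmax} fits the resource bounds defining \class{NC(poly)}, and then to combine \class{NC(poly)}=\class{PSPACE}~\cite{Borodin77} with the correctness guarantee of Theorem~\ref{thm:algorithm}. First I would fix the parameters. The input is a classical description of $\Q_0,\Q_1$ of total length $n$; from it Theorem~\ref{thm:conversion} already builds $\Xi$ and $\Xi^\ast$ in parallel, and every matrix subsequently handled — the $\rho^{(t)}$, $\Xi(\rho^{(t)})$, $\Pi^{(t)}$, $M^{(t)}$, $W^{(t)}$ — acts on complex Euclidean spaces of dimension $2^{O(n)}$, so each has entries indexed by $O(n)$ bits and $N=2^{O(n)}$, whence $\ln N=O(n)$. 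Taking $\delta=0.2$ a fixed constant (the value prescribed after Theorem~\ref{thm:algorithm} to separate the two QCD promises), the round count is $T=\ceil{16\ln N/\delta^2}=O(n)=\mathrm{poly}(n)$.

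Second I would show that one iteration of the loop is an \class{NC(poly)} map of its state, which I take to be the running sum $S^{(t-1)}=\sum_{\tau<t}M^{(\tau)}$ together with the running accumulator $\sum_{\tau<t}\ip{\Pi^{(\tau)}}{\Xi(\rho^{(\tau)})}$. Inside a round one: forms $W^{(t)}=\exp(-\varepsilon S^{(t-1)})$ and normalizes by its trace to get $\rho^{(t)}$; applies $\Xi$ (a composition of matrix products, sums and partial traces) to get $\Xi(\rho^{(t)})$; sets $\Pi^{(t)}$ to the projector onto its positive eigenspace; applies $\Xi^\ast$ and the affine rescaling $X\mapsto(X+\I)/2$ to get $M^{(t)}$; and updates $S^{(t)}=S^{(t-1)}+M^{(t)}$ and the accumulator. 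Each primitive here — matrix product, sum, partial trace, matrix exponential, positive-eigenspace projection, trace, scalar division — is computed by a logspace-uniform Boolean circuit of poly-logarithmic depth in the matrix dimension (see~\cite{vzGathen93}), hence by a polynomial-space-uniform circuit of polynomial depth in $n$; composing the constantly many primitives of a round keeps us in \class{NC(poly)}.

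Third I would compose the rounds. Since $W^{(t+1)}$ depends on every earlier $M^{(\tau)}$, the loop is inherently sequential, so rather than parallelizing across $t$ I would simply stack the $T=\mathrm{poly}(n)$ copies of the per-round \class{NC(poly)} circuit; depths add, so the total depth is $\mathrm{poly}(n)\cdot\mathrm{poly}(n)=\mathrm{poly}(n)$, the space stays polynomial, and polyspace-uniformity is preserved — this is just the iterate of the $g\circ f$ closure remark of Section~\ref{sec:NC}. The final Step~3 of Figure~\ref{fig:minmax} — divide the accumulator by $T$, compare with the threshold ($0.95$ versus $0.32$, as in the discussion after Theorem~\ref{thm:conversion}), and accept or reject — is plainly in \class{NC}. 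Hence the whole algorithm lies in \class{NC(poly)}=\class{PSPACE}; by Theorem~\ref{thm:algorithm} it outputs $\lambda$ with $|\lambda-\equil{\lambda}(\Xi)|\le\delta$, which by Theorem~\ref{thm:conversion} suffices to decide $\class{QCD}_{2-\varepsilon,\varepsilon}$, and since the latter is QIP-complete~\cite{RosgenW05} this gives \class{QIP}$\subseteq$\class{PSPACE}; together with the known \class{PSPACE}$\subseteq$\class{QIP} the corollary \class{QIP}=\class{PSPACE} follows.

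The hard part — which I would, like the paper, relegate to Appendix~\ref{sec:precision_issue} — is numerical precision: matrix exponentials and positive-eigenspace projections are only \emph{approximately} computable in \class{NC}, so one must verify that per-round errors, and their propagation through $T$ rounds of the multiplicative-weights recursion, can be held to a small fraction of $\delta$ using only poly-logarithmically many bits, and that near-degenerate $\Xi(\rho^{(t)})$ (eigenvalues close to $0$) does not render the projection onto the positive eigenspace ill-conditioned. In the body I take the matrix primitives to be exact, so that the argument above reduces to a clean bookkeeping of circuit depth, space, and uniformity.
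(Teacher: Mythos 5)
Your proposal is correct and follows essentially the same route as the paper: construct $\Xi$ in \class{NC(poly)}, compose the $T=\mathrm{poly}(n)$ sequential iterations of \class{NC} matrix primitives (each of depth $\mathrm{polylog}(N)=\mathrm{poly}(n)$ since $N=2^{O(n)}$), threshold the output, and defer precision to Appendix~\ref{sec:precision_issue}. Your accounting of parameters and of how depths add under sequential composition is in fact somewhat more careful than the paper's own three-family decomposition, but the argument is the same.
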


\begin{proof} To simulate the algorithm, it suffices to compose the
following families of Boolean circuits.
\begin{enumerate}
  \item A family of Boolean circuits that output the representation
  of the quantum channel $\Xi$ (in Theorem~\ref{thm:conversion}) generated from the input $x$, namely, the
  descriptions of two mixed quantum circuits.
  \item Follow the algorithm in Figure~\ref{fig:minmax}. Compose all
  the operations in each iteration. Consider the fact that fundamental matrix
  operations
  can be done in \class{NC} and the number of iterations $T=\ceil{\frac{16 \ln
N}{\delta^2}}$ is polynomial in the size of $x$ since $\delta$ is a
constant and $N$ is exponential in the size of $x$.
  \item The circuits to distinguish between the two promises by making use of the
  value returned in the circuits above.
\end{enumerate}
The first family is easily done in \class{NC(poly)}, by computing
the product of a polynomial number of exponential-size matrices
which corresponds to the mixed quantum circuits. The second family
is in \class{NC} by composing polynomial number of \class{NC}
circuits. The third one is obviously in \class{NC}. The whole
process is in \class{NC(poly)} by composing the \class{NC(poly)} and
\class{NC} circuits above, and thus in \class{PSPACE}.
\end{proof}

It follows from the \emph{Theorem~\ref{thm:NC_simulation}} that we
can solve $QCD_{1.9,0.1}$ problems in \class{PSPACE}. Since
$QCD_{1.9,0.1}$ is QIP-Complete problem, and any polynomial
reduction to that problem can be easily done in \class{NC(poly)} by
computing the product of a polynomial number of exponential-size
matrices, thus we have \class{QIP}$\subseteq$\class{PSPACE}.
Combining with the known result
\class{PSPACE}=\class{IP}$\subseteq$\class{QIP}, we have,

\begin{corollary}
  \class{QIP}=\class{PSPACE}
\end{corollary}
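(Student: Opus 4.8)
The plan is to prove the corollary by establishing the two inclusions \class{QIP}\subseteq\class{PSPACE} and \class{PSPACE}\subseteq\class{QIP} and combining them. Essentially all the work lies in the first inclusion, for which the earlier sections have already supplied every ingredient, so the proof of the corollary amounts to assembling them.

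For the hard direction, I would first fix a concrete constant gap by invoking the \class{QIP}-completeness of \class{QCD}_{2-\varepsilon,\varepsilon} established by Rosgen \emph{et al.}; taking \varepsilon=0.1 makes \class{QCD}_{1.9,0.1} a \class{QIP}-complete problem. Next, for an arbitrary language L\in\class{QIP}, completeness gives a many-one reduction from L to \class{QCD}_{1.9,0.1}, and this reduction — which only has to output the classical descriptions of the two mixed-state circuits \Q_0,\Q_1 — can be carried out in \class{NC(poly)} by forming the product of a polynomial number of exponential-size matrices. It then remains to decide \class{QCD}_{1.9,0.1} in \class{PSPACE}, which is precisely the content of \emph{Theorem~\ref{thm:NC_simulation}}: build the super operator \Xi via \emph{Theorem~\ref{thm:conversion}}, run the algorithm of \emph{Theorem~\ref{thm:algorithm}} to approximate \equil{\lambda}(\Xi) to precision \delta=0.2, and read off which promise holds. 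Composing the \class{NC(poly)} reduction with this \class{NC(poly)} decision procedure and using \class{NC(poly)}=\class{PSPACE} places L in \class{PSPACE}, so \class{QIP}\subseteq\class{PSPACE}.

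For the easy direction, I would invoke the classical result \class{IP}=\class{PSPACE} together with the trivial containment \class{IP}\subseteq\class{QIP} (a classical interactive proof is a special case of a quantum one), yielding \class{PSPACE}=\class{IP}\subseteq\class{QIP}. Combining the two inclusions gives \class{QIP}=\class{PSPACE}. The only genuine obstacle is the forward inclusion, but it has already been dissolved by the conversion theorem, the multiplicative-weights algorithm, and the simulation theorem; the one computation that still deserves care is checking that the chosen precision separates the promises, namely that the two target values 0.95 and 0.32 differ by more than 2\delta, so that an approximation of \equil{\lambda}(\Xi) to within \delta=0.2 unambiguously decides \class{QCD}_{1.9,0.1}.
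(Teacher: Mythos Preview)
Your proposal is correct and follows essentially the same route as the paper: use \class{QIP}-completeness of $\class{QCD}_{1.9,0.1}$, solve that problem in \class{NC(poly)} via \emph{Theorem~\ref{thm:NC_simulation}} (built on \emph{Theorems~\ref{thm:conversion}} and~\emph{\ref{thm:algorithm}}), observe that the reduction itself is in \class{NC(poly)}, and combine with \class{PSPACE}=\class{IP}\subseteq\class{QIP}. Your explicit check that the gap $0.95-0.32>2\delta$ suffices is a welcome detail the paper leaves implicit.
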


We notice that all the proof above so far is based on the assumption
that all the matrix operations can be simulated exactly. However in
practice, we will need to truncate the precision to some place for
some operations to be performed. Fortunately, this won't be an
essential obstacle for the implementation of the algorithm. As
mentioned in the introduction, all those issues will be handled in
Appendix~\ref{sec:precision_issue} without any change of the main
result.

\section{Algorithm for computing the diamond norm} \label{sec:diamond}
Now it is our turn to discuss the computation of the diamond norm
for a special class of quantum channels. Consider the following
promised version of the problem first.

\begin{definition} [Promised Diamond Norm Problem]
Given the classical description of any two admissible quantum
channels $\Q_0, \Q_1$, the promised diamond norm problem
$\text{PDN}(\Q_0,\Q_1,a,b)$ is asked to distinguish between the two
cases , namely whether $\snorm{\Q_0-\Q_1}_\diamond\geq a$ or
$\snorm{\Q_0-\Q_1}_\diamond \leq b$ where $a-b \in
\Omega(1/\text{poly-log})$.
\end{definition}

As mentioned in the introduction part, Theorem~\ref{thm:conversion}
explicitly makes one conversion from diamond norms to equilibrium
values that perverse the promised gap. It is easy to see that
Theorem~\ref{thm:conversion} works for any admissible channels
$\Q_0, \Q_1$. Furthermore, this conversion can be done efficiently
as long as the classical descriptions of $\Q_0, \Q_1$ are given.
Thus, by combing the results in Theorem~\ref{thm:algorithm}, one can
easily solve the promised version of diamond norm problems. Note
that the input size changes to be the size of the matrix
representing the channel now. 

\begin{prop} \label{prop:simple_case}
There is a \class{NC} algorithm that solves
$\text{PDN}(\Q_0,\Q_1,a,b)$ where $a^2-(4b-b^2) \in
\Omega(1/\text{poly-log})$.
\end{prop}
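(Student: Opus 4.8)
The plan is to combine the conversion of Theorem~\ref{thm:conversion} with the algorithm of Theorem~\ref{thm:algorithm} and then verify that all the bookkeeping fits inside \class{NC} once the input size is taken to be the bit-length of the matrices describing $\Q_0,\Q_1$. First I would apply Theorem~\ref{thm:conversion} to the instance $(\Q_0,\Q_1)$ with the specified thresholds, obtaining the space $\Z\otimes\Q$ and the super operator $\Xi = \widetilde{\Phi_A}-\widetilde{\Phi_B}$ such that $\snorm{\Q_0-\Q_1}_\diamond \geq a$ forces $\equil{\lambda}(\Xi)\leq \tfrac{\sqrt{4-a^2}}{2}$, while $\snorm{\Q_0-\Q_1}_\diamond \leq b$ forces $\equil{\lambda}(\Xi)\geq \tfrac{2-b}{2}$. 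The key arithmetic point is that the gap between these two promised values is
\[
  \frac{2-b}{2} - \frac{\sqrt{4-a^2}}{2}
  = \frac{(2-b) - \sqrt{4-a^2}}{2},
\]
and, since both quantities lie in a bounded range, this difference is $\Omega\big((2-b)^2 - (4-a^2)\big) = \Omega\big(a^2 - (4b-b^2)\big)$; by hypothesis this is $\Omega(1/\text{poly-log})$. (Concretely, $\frac{2-b}{2}-\frac{\sqrt{4-a^2}}{2}$ equals $\frac{(2-b)^2-(4-a^2)}{2((2-b)+\sqrt{4-a^2})}$ and the denominator is at most $4$.)

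Next I would run the algorithm of Fig.~\ref{fig:minmax} on $\Xi$ with precision $\delta$ set to a constant fraction of that gap, say one third of it; Theorem~\ref{thm:algorithm} guarantees the returned value $\lambda$ satisfies $|\lambda-\equil{\lambda}(\Xi)|\leq\delta$, so comparing $\lambda$ against the midpoint of the two promised values distinguishes the two cases. The number of rounds is $T=\ceil{16\ln N/\delta^2}$ where $N=\dim(\X_0\otimes\X_1)$; crucially, now the input size $n$ is the bit-size of the channel matrices, so $N$ is \emph{polynomial} in $n$ (not exponential, as in the \class{QIP} setting), $\ln N = O(\log n)$, and $1/\delta^2$ is $O(\text{poly-log}\, n)$, hence $T$ is poly-logarithmic in $n$. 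Each round performs only fundamental matrix operations on matrices of size polynomial in $n$ — forming $\rho^{(t)}$ by a matrix exponential and normalization, applying $\Xi$ and $\Xi^{\ast}$, and computing the projection $\Pi^{(t)}$ onto the positive eigenspace of $\Xi(\rho^{(t)})$ — each of which is in \class{NC} by the standard facts cited in Section~\ref{sec:NC} (and handled to sufficient precision in Appendix~\ref{sec:precision_issue}). Composing $T=O(\text{poly-log}\,n)$ many \class{NC} circuits, together with the initial \class{NC} construction of $\Xi$ from the classical descriptions (which here requires only products of polynomially-sized matrices, not exponentially-sized ones) and the final \class{NC} comparison step, yields an \class{NC} algorithm overall.

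The main obstacle is not the structure of the argument but the precision accounting: because $\delta$ is now only inverse poly-logarithmic rather than a constant, the truncation errors incurred in the \class{NC} implementations of the matrix exponential, the eigenspace projection, and the accumulated sums must be controlled to within a poly-logarithmic factor of $\delta$ without blowing up the depth beyond poly-logarithmic. I would argue that since each arithmetic primitive can be computed to any $1/\text{poly}(n)$ (indeed $2^{-\text{poly}(n)}$) accuracy in \class{NC}, and there are only $T=O(\text{poly-log}\,n)$ rounds whose errors add at most linearly, choosing the per-step precision to be, say, $\delta/(10T)$ suffices; this only adds a poly-logarithmic overhead to the bit-length of the numbers manipulated and hence keeps everything in \class{NC}. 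The details mirror Appendix~\ref{sec:precision_issue} but must be re-examined with the smaller $\delta$, and this re-examination is where the real work lies.
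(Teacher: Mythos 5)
Your proposal is correct and follows essentially the same route as the paper: combine Theorem~\ref{thm:conversion} with Theorem~\ref{thm:algorithm}, observe that the equilibrium-value gap $\frac{2-b}{2}-\frac{\sqrt{4-a^2}}{2}$ is $\Omega\bigl(a^2-(4b-b^2)\bigr)$ so that $\delta$ and hence $T$ stay poly-logarithmic in the (now polynomial) dimension $N$, and compose the \class{NC} circuits. The paper states this as a one-paragraph ``direct consequence''; your version simply makes explicit the rationalization of the gap and the precision bookkeeping that the paper defers to Appendix~\ref{sec:precision_issue}.
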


\begin{proof}
This is a direct consequence when one combines the result of
Theorem~\ref{thm:conversion} and Theorem~\ref{thm:algorithm}. Given
the promise, by Theorem~\ref{thm:conversion}, one can efficiently
compute the equilibrium value $\equil{\lambda}(\Xi)$ whose value is
either no more than $\frac{\sqrt{4-a^2}}{2}$ or no less than
$\frac{2-b}{2}$. Hence if the difference $a^2-(4b-b^2) \in
\Omega(1/\text{polylog})$, one can use the algorithm in
Fig~\ref{fig:minmax} to calculate $\equil{\lambda}(\Xi)$ efficiently
in parallel to sufficient precision in order to distinguish between
those two cases. The \class{NC} algorithm follows directly when one
composes the circuits for each step.
\end{proof}

The only undesired thing of this algorithm is we can only solve the
problem when the condition $a^2-(4b-b^2) \in
\Omega(1/\text{polylog})$. This constraint makes it impossible for
our algorithm to work for the whole range $0\leq b \leq a \leq 2$.
Evidences (implicitly in~\cite{RosgenW05}, e.g. Theorem 4.3) also
demonstrate that simply repeating the channels for many times, like
considering the alternative channels $(\Q_0-\Q_1)^{\otimes N}$ or
$\Q_0^{\otimes N}-\Q_1^{\otimes N}$ for some $N$, doesn't work
either.

Fortunately, there is one conceptually easy but technically detoured
approach to amplify the gap in general. Particularly, we will make
use of some known properties of the quantum interactive proof
systems and abuse them for a different purpose. Intuitively, we
treat any quantum interactive proof protocol (assume the input is
fixed) as a promised problem where the acceptance probability is
either at least the completeness probability or no more than the
soundness probability. Then we will convert the promised diamond
norm problem into such a promised problem of one specific quantum
interactive proof protocol. The crucial observation is the
parallelization, amplification lemmas together with the complete
problem results about quantum interactive proof systems can be
exploited to amplify arbitrary gap of any general promised diamond
norm problem and convert it to a new diamond norm
problem\footnote{Actually, it suffices to convert the original
problem to \emph{close images} problem. However, for the simplicity
of description, we choose \emph{QCD} problem instead.} that can be
solved by Proposition~\ref{prop:simple_case}.

Let us demonstrate this approach with full detail. More importantly,
we will show such conversion can also be computed efficiently. The
latter one is due to the fact the parallelization and amplification
procedures are constructed explicitly in~\cite{KitaevW00}. Any
protocol $\mathbf{P}$\footnote{Precisely, a single prover quantum
interactive proof protocol.} with completeness $a$ and soundness $b$
will be denoted by $\mathbf{P}(a,b)$. Recall that any such protocol
$\mathbf{P}(a,b)$ is treated as a promise problem where the
acceptance probability is either at least $a$ or at most $b$. The
promised diamond norm problem $\text{PDN}(\Q_0,\Q_1,a,b)$ can thus
be converted to the following protocol directly.

\begin{definition}
The protocol $\mathcal{P}_\diamond[\Q_0,\Q_1]$:
\begin{enumerate}
  \item The verifier receives some quantum state $\rho$ from the
  prover.
  \item The verifier selects $\{0,1\}$ uniformly and applies $\Q_i$
  to $\rho$ and sends the result to the prover.
  \item The verifier receives some $j$ from the prover, accepts if
  $i=j$ and rejects otherwise.
\end{enumerate}
\end{definition}

The protocol is almost identical to the Protocol 3.2
in~\cite{RosgenW05}. The only difference is the verifier needs to
perform arbitrary admissible quantum channel $\Q_i$. It is not
possible in general when the verifier's power is polynomial time
bounded since arbitrary $\Q_i$ might need huge number of gates to
simulate. However, it won't be an issue for us since we treat such
protocol as a promised problem without its original meaning and
$\Q_0,\Q_1$'s description is already given. It follows immediately
from~\cite{RosgenW05} that the protocol
$\mathbf{P}_\diamond[\Q_0,\Q_1]$ has completeness $1/2+a/4$ and
soundness $1/2+b/4$ given the promise that either
$\snorm{\Q_0-\Q_1}_\diamond\geq a$ or $\snorm{\Q_0-\Q_1}_\diamond
\leq b$.


The parallelization and amplification lemmas~\cite{KitaevW00} can
then be reinterpreted as a way to convert any protocol
$\mathbf{P}(c,d)$ to some protocol $\mathbf{P}'(c',d')$ with desired
$c'$ and $d'$. This conversion can be efficiently computed when the
gap between $c$ and $d$ is appropriate. For the promised diamond
norm problem, we starts with some protocol
$\mathbf{P}_\diamond(a,b)$  where the gap between $a$ and $b$ is at
least inverse poly-logarithm and converts it to some protocol
$\mathbf{P}'_\diamond(1,1/2)$. Moreover, such conversion can be done
in \class{NC}. Secondly, because $QCD_{1.9,0.1}$ is QIP-Complete
problem, we can convert the protocol $\mathbf{P}'_\diamond(1,1/2)$
again to a new promised diamond norm problem $\text{PDN}(\Q_0',
\Q_1', 1.9, 0.1)$ where such conversion is implicitly inside the
proof of QIP-Completeness of the $QCD_{1.9,0.1}$~\cite{RosgenW05}
and the new channels $\Q_0', \Q_1'$ can be computed in \class{NC} as
well. Finally, we will invoke the algorithm in
Proposition~\ref{prop:simple_case} to solve the new problem. To sum
up,

\begin{figure}[t]
\noindent\hrulefill
\begin{mylist}{8mm}
\item[1.] If $a^2-(4b-b^2) \in
\Omega(1/\text{polylog})$, use the algorithm in
Proposition~\ref{prop:simple_case} to solve it directly. Otherwise,
continue to next step.
\item[2.] Convert the original problem to
$\mathbf{P}_\diamond[\Q_0,\Q_1](1/2+a/4,1/2+b/4)$.
\item[3.] According to the parallelization lemma and amplification
lemma~\cite{KitaevW00}, convert the protocol
$\mathbf{P}_\diamond[\Q_0,\Q_1](1/2+a/4,1/2+b/4)$ to
$\mathbf{P}'_\diamond(1,1/2)$. According to the construction
implicit in the proof of the QIP-Completeness of the problem
$QCD_{1.9,0.1}$, convert the protocol $\mathbf{P}'_\diamond(1,1/2)$
to a new promised diamond norm problem $\text{PDN}(\Q_0', \Q_1',
1.9, 0.1)$.
\item[4.] Use the algorithm in Proposition~\ref{prop:simple_case} to
solve the new problem $\text{PDN}(\Q_0', \Q_1', 1.9, 0.1)$ and
return the answer.
\end{mylist}
\noindent\hrulefill \caption{Algorithm for the
$\text{PDN}(\Q_0,\Q_1,a,b)$ problem.} \label{fig:pdn}
\end{figure}

\begin{prop}
There is a \class{NC} algorithm shown in Fig.~\ref{fig:pdn} that
solves general $\text{PDN}(\Q_0,\Q_1,a,b)$ problems.
\end{prop}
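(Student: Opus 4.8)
The plan is to verify that each of the four steps listed in Fig.~\ref{fig:pdn} is (i) correct, in the sense that it preserves the promise, and (ii) implementable in \class{NC}, and then to observe that composing \class{NC} circuits yields an \class{NC} algorithm. First I would dispose of the easy branch: if $a^2-(4b-b^2)\in\Omega(1/\text{polylog})$, then Proposition~\ref{prop:simple_case} already gives an \class{NC} algorithm, and there is nothing more to do. So the substance is the other branch, where the gap between $a$ and $b$ is only guaranteed to be $\Omega(1/\text{polylog})$ and we must amplify it. The key point to establish for Step~2 is that, given the promise $\snorm{\Q_0-\Q_1}_\diamond\geq a$ or $\leq b$, the protocol $\mathbf{P}_\diamond[\Q_0,\Q_1]$ has completeness $1/2+a/4$ and soundness $1/2+b/4$; this is exactly the content cited from~\cite{RosgenW05} (their Protocol~3.2 analysis), the only change being that the verifier is handed the classical description of $\Q_i$ rather than a small circuit for it, which is harmless since we are only using the protocol as a promise problem. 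Writing out the description of this protocol from $\Q_0,\Q_1$ is plainly an \class{NC} task.

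Next I would handle Step~3, which is the heart of the argument. Here one invokes the parallelization and amplification lemmas of~\cite{KitaevW00}: a single-prover quantum interactive proof protocol $\mathbf{P}(c,d)$ with $c-d\in\Omega(1/\text{polylog})$ can be converted, by running polynomially many parallel copies and taking a majority/threshold decision, into a protocol $\mathbf{P}'(1,1/2)$ (in fact with perfect completeness). The essential observations to record are: (a) the number of parallel repetitions needed to push a $1/\text{polylog}$ gap up to a constant gap, and then to drive completeness to $1$, is polynomial in the input size; (b) the parallelization lemma keeps the round complexity bounded so that the resulting protocol is still a \class{QIP} protocol; and (c) since the repetition/parallelization constructions of~\cite{KitaevW00} are given explicitly as circuit transformations, and the input size here is the size of the (exponential-dimensional) matrices describing $\Q_0,\Q_1$, assembling the description of $\mathbf{P}'_\diamond(1,1/2)$ from that of $\mathbf{P}_\diamond[\Q_0,\Q_1]$ is an \class{NC} computation. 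Then, because $\class{QCD}_{1.9,0.1}$ is QIP-Complete~\cite{RosgenW05}, the proof of its QIP-hardness supplies an explicit reduction taking any \class{QIP} protocol with completeness $1$ and soundness $1/2$ to an instance $\text{PDN}(\Q_0',\Q_1',1.9,0.1)$; I would note that this reduction, being built from the circuit-to-diamond-norm construction in~\cite{RosgenW05}, is again realizable in \class{NC} once the protocol's description is in hand, and that it preserves the promise ($\snorm{\Q_0'-\Q_1'}_\diamond\geq 1.9$ in the yes case, $\leq 0.1$ in the no case).

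Finally, Step~4 applies Proposition~\ref{prop:simple_case} to $\text{PDN}(\Q_0',\Q_1',1.9,0.1)$: here $a'=1.9$, $b'=0.1$, so $a'^2-(4b'-b'^2)=3.61-0.39=3.22$, which is a constant and hence trivially in $\Omega(1/\text{polylog})$, so the hypothesis of Proposition~\ref{prop:simple_case} is met and that step is a legitimate \class{NC} subroutine. To conclude, I would assemble the whole pipeline and invoke the composition fact from Section~\ref{sec:NC} (the composition of finitely many \class{NC} families is \class{NC}), observing that correctness is transitive along the chain of promise-preserving reductions: a yes (resp.\ no) instance of the original $\text{PDN}(\Q_0,\Q_1,a,b)$ maps to a yes (resp.\ no) instance at every stage and is therefore accepted (resp.\ rejected) by the final call to Proposition~\ref{prop:simple_case}. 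The main obstacle I expect is bookkeeping in Step~3: one has to make sure the parallelization lemma is being applied in a regime where a $1/\text{polylog}$ gap really does amplify to a constant in polynomially many rounds while keeping everything inside \class{NC}, and to check that the QIP-hardness reduction of~\cite{RosgenW05} is genuinely parallelizable (i.e.\ does not hide, say, a sequential composition of exponentially many gates) — the precision and uniformity details there are exactly the sort of thing deferred to Appendix~\ref{sec:precision_issue}, and I would simply cite that treatment rather than redo it.
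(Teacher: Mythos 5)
Your proposal follows essentially the same route as the paper: the easy branch via Proposition~\ref{prop:simple_case}, conversion to the protocol $\mathbf{P}_\diamond[\Q_0,\Q_1]$ with completeness $1/2+a/4$ and soundness $1/2+b/4$, amplification to $(1,1/2)$ via the parallelization and amplification lemmas of~\cite{KitaevW00}, conversion back to $\text{PDN}(\Q_0',\Q_1',1.9,0.1)$ through the QIP-hardness construction of~\cite{RosgenW05}, and a final call to Proposition~\ref{prop:simple_case}, with \class{NC} composition throughout. Your explicit check that $a'^2-(4b'-b'^2)=3.22$ is a constant, and your flagging of the \class{NC}-implementability of the intermediate reductions, are welcome details that the paper states only implicitly.
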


As a standard technique, an algorithm for the promised version of
problems can be used as a subroutine to solve the general problems
via binary search when the range of the possible results is bounded.
In our case, the diamond norm between any two admissible quantum
channels is bounded between 0 and 2. Hence, by recursively calling
the subroutine in Fig.~\ref{fig:pdn} $O(\text{poly-log})$ times, one
can compute the diamond norm with $\Omega(1/\text{poly-log})$
precision.

\begin{corollary}
Given the classical description of any two admissible quantum
channels $\Q_0, \Q_1$, the diamond norm of their difference
$\snorm{\Q_0-\Q_1}_\diamond$ can be approximated in \class{NC} with
inverse poly-logarithm precision.
\end{corollary}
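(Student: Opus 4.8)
The plan is to derive the corollary from the previous proposition --- the \class{NC} algorithm of Figure~\ref{fig:pdn} that solves general $\text{PDN}(\Q_0,\Q_1,a,b)$ instances whenever $a-b\in\Omega(1/\text{poly-log})$ --- by the standard reduction from a promise problem to an approximation task, exploiting the a priori bound $0\leq\snorm{\Q_0-\Q_1}_\diamond\leq 2$. Write $v=\snorm{\Q_0-\Q_1}_\diamond$ and fix a target precision $\epsilon\in\Omega(1/\text{poly-log})$. I would perform a bisection on $[0,2]$: maintaining an interval $[l,r]$ guaranteed to contain $v$ up to a slack of order $\epsilon$, at each step put $m=(l+r)/2$ and call the subroutine of Figure~\ref{fig:pdn} on $\text{PDN}(\Q_0,\Q_1,\,m+\epsilon/2,\,m-\epsilon/2)$ (with the two thresholds clipped into $[0,2]$ near the endpoints). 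If it accepts, replace $[l,r]$ by $[m-\epsilon/2,r]$; if it rejects, replace it by $[l,m+\epsilon/2]$. Each such step is correct because the promised gap here is exactly $\epsilon\in\Omega(1/\text{poly-log})$, which is the regime the subroutine handles. After $k$ steps $[l,r]$ has length at most $2^{1-k}+\epsilon$, so after $O(\log(1/\epsilon))$ steps --- in particular poly-logarithmically many --- the interval has length $O(\epsilon)$, and its midpoint is the sought estimate of $v$.

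For the complexity claim I would invoke the composition facts for \class{NC} recorded in Section~\ref{sec:NC}. Each bisection step consists of one invocation of the \class{NC} circuit family of Figure~\ref{fig:pdn} plus an $O(1)$ amount of arithmetic to update $l,r,m$, so it is a log-space uniform family of poly-logarithmic depth. The queries are adaptive, but this merely means the output of one stage is wired into the input of the next; chaining poly-logarithmically many stages multiplies the total depth by a poly-logarithmic factor, which is still poly-logarithmic, and uniformity is preserved. Hence the whole estimator lies in \class{NC}.

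The one point that needs care is the ``dead zone'' of the promise oracle: when $v$ lies strictly between the thresholds $m\pm\epsilon/2$, the subroutine may answer either way. This is exactly why we aim only for an approximation rather than an exact value --- an arbitrary answer in that regime still leaves $v$ inside the updated interval, so the process terminates with an interval of length $O(\epsilon)$ containing $v$ and a final error $O(\epsilon)\subseteq O(1/\text{poly-log})$, which can be tuned below any prescribed inverse-poly-logarithmic bound. One should also check that the gap queried at every step stays in $\Omega(1/\text{poly-log})$ so that the gap-amplification stage inside Figure~\ref{fig:pdn} (its Steps~2--3) remains efficiently realizable; since we never query a gap smaller than $\epsilon$, this holds automatically. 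As throughout the paper, the finite-precision issues in the underlying matrix arithmetic are deferred to Appendix~\ref{sec:precision_issue}.
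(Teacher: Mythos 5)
Your proposal is correct and follows exactly the route the paper takes: a binary search over $[0,2]$ using the \class{NC} subroutine of Fig.~\ref{fig:pdn} on promise gaps of size $\Omega(1/\text{poly-log})$, with poly-logarithmically many adaptive calls composed in \class{NC}. The paper states this in one sentence as a ``standard technique''; your write-up merely fills in the dead-zone and uniformity details of the same argument.
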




\section{Summary} \label{sec:summary}
In this paper, we provide an alterative proof for
\class{QIP}=\class{PSPACE} which starts from one
\class{QIP}-Complete problem that computes the diamond norm between
two quantum admissible channels. The key observation here is to
convert the computation of the diamond norm to the computation of an
equilibrium value. The later problem turns out to be a more
structured problem and has a good solution in \class{NC(poly)} and
thus in \class{PSPACE}. Besides reducing from the QCD problem, we
could also reduce from the very first \class{QIP}-Complete problem
\emph{close images}~\cite{KitaevW00} or the protocol to simulate
\class{QIP} with competing provers in~\cite{GutoskiW05}. Both
reductions will lead to the similar equilibrium values to the one in
this paper. The technique of computing the equilibrium values in
this paper can then be applied directly and lead to the same result.

The multiplicative weights update method in our proof to solve the
equilibrium value problem can be generalized to solve a class of
such equilibrium value problems. Particularly, for any density
operator set $\density{\X}$ and another convex compact set $\Gamma$,
the following general equilibrium value problem
\[
  \equil{\lambda}(\Phi)=\min_{\rho \in \density{\X}} \max_{\sigma \in \Gamma}
\ip{\sigma}{\Phi(\rho)} = \max_{\sigma \in \Gamma} \min_{\rho \in
\density{\X}} \ip{\Phi^{\ast}(\sigma)}{\rho}
\]
can be solved efficiently in \class{NC} by the same algorithm in our
paper if a good approximation algorithm to compute $\max_{\sigma \in
\Gamma} \ip{\sigma}{\Phi(\rho^\ast)}$ given $\rho^\ast$ is available
in \class{NC} and $\max_{\rho \in \density{\X}} \max_{\sigma \in
\Gamma} |\ip{\sigma}{\Phi(\rho)}|$ is bounded by some poly-logarithm
function.

One big open problem is to investigate to what extent the technique
in this paper can be used to solve general equilibrium value
problems. As mentioned in~\cite{JainJUW09}, it is impossible to
solve arbitrary SDPs in parallel unless \class{NC}=\class{P}. It
might be the same case for the general equilibrium value problems. A
recent effort~\cite{Wu10,GutoskiW10} made some progress on the
general form of the equilibrium value that can be solved by similar
techniques and the connection between the equilibrium value problems
and semidefinite programming problems. Moreover, since the first
release of this paper, some deeper knowledge of variants of quantum
interactive proof system is obtained. Particularly, the main open
problem in~\cite{JainJUW09}, namely whether
\class{QRG(2)}=\class{PSPACE}, is resolved~\cite{GutoskiW10} with
positive answer. The class \class{QRG(2)} contains all problems
which can be recognized by two-turn (i.e, one-round) quantum
refereed games. The classical analogue of this class is known to
coincide with \class{PSPACE}~\cite{FeigeK97}.

Another open problem is how to extend the connection we build here
between the computation of diamond norms and the computation of
equilibrium values. One might hope to obtain efficient parallel
algorithm for calculating diamond norms of any quantum channel.

\subsection*{Acknowledgments}
The author is grateful to Zhengfeng Ji, Yaoyun Shi and John Watrous
for helpful discussions. The author also wants to thank the
hospitality and invaluable guidance of John Watrous when the author
was visiting the Institute for Quantum Computing, University of
Waterloo. The research was completed during this visit and was
supported by the Canadian Institute for Advanced Research (CIFAR).
The author also wants to thank Rahul Jain, Zhengfeng Ji and
anonymous reviewers for helpful comments on the manuscript. The
author's research was also partially supported by US NSF under the
grants 0347078 and 0622033.

\appendix
\section{Proof of Theorem \ref{thm:mmw_simple}}
Assume $0\leq M^{(t)} \leq \I$ for all $t$, after $T$ rounds, the
algorithm in Fig~\ref{fig:mmw} guarantees that, for any $\rho^{\ast}
\in \density{\X}$, we have
\begin{equation*} 
     (1-\epsilon)\sum_{t=1}^T
    \ip{\rho^{(t)}}{M^{(t)}} \leq \ip{\rho^{\ast}}{\sum_{t=1}^T
    M^{(t)}} + \frac{lnN}{\epsilon}
\end{equation*}

\begin{proof}
It is easy to see that all $W^{(t)} \in \pos{\X}$. Observe that, for
$t=1, \cdots, T$,

\begin{align*}
\tr(W^{(t+1)}) & = \tr\left[
  \exp\(-\varepsilon\sum_{\tau=1}^t M^{(\tau)}\)
  \right]
  \leq
\tr\left[\exp\(-\varepsilon\sum_{\tau=1}^{t-1}
M^{(\tau)}\)\exp\(-\varepsilon M^{(t)}\)
  \right]\\
& = \tr\left[W^{(t)} \exp\(-\varepsilon M^{(t)}\) \right] \leq \tr
\left[W^{(t)}(\I_{\X}-\varepsilon'M^{(t)})
  \right] \\
& = \ip{W^{(t)}}{\I_{\X}-\varepsilon'M^{(t)}} =\tr
\left[W^{(t)}\right](1-\varepsilon'\ip{\rho^{(t)}}{M^{(t)}})\\
& \leq \tr \left[W^{(t)}\right] \exp (-\varepsilon'
\ip{\rho^{(t)}}{M^{(t)}})
\end{align*}
The inequality in the first line is due to Golden-Thompson
inequality~\cite{Bhatia97}. The second inequality is due to
Lemma~\ref{lem:inequality} where $\varepsilon'=1-e^{-\varepsilon}$.
The third line is obtained by substituting
$\rho^{(t)}=W^{(t)}/\tr{W^{(t)}}$. The final inequality is obtained
by noticing that $1-\varepsilon'x \leq e^{-\varepsilon'x}, \forall x
\in [0,1]$ and $\ip{\rho^{(t)}}{M^{(t)}} \in [0,1]$.

If we repeat the process above, by induction as well as the fact
$W^{(1)}=\I_{\X}$, we have:
\[
  \tr \left[ W^{(T+1)}\right] \leq N \exp(-\varepsilon'
  \sum_{\tau=1}^T \ip{\rho^{(\tau)}}{M^{(\tau)}})
\]
where $N=\dim{(\X)}$ as defined above. On the other hand, we have
\[
  \tr \left[ W^{(T+1)}\right]= \tr \left[ \exp(-\varepsilon
  \sum_{\tau=1}^T M^{(\tau)}) \right] \geq \exp (-\varepsilon
  \lambda_N (\sum_{\tau=1}^T M^{(\tau)}))
\]
The last inequality holds because $\tr(e^A)=\sum_{i=1}^{\dim(\A)}
e^{\lambda_i(A)} \geq e^{\lambda_1(A)}$. Thus, we conclude that
\[
  \exp(-\varepsilon\lambda_N(\sum_{\tau=1}^T M^{(t)})) \leq N \exp(-\epsilon'
 \sum_{\tau=1}^T \ip{\rho^{(t)}}{M^{(t)}})
\]
Since $\lambda_N$ is the minimum eigenvalue, for any density
operator $\rho^{\ast} \in \density{\X}$, we have
$\lambda_N(\sum_{\tau=1}^T M^{(t)})\leq
\ip{\rho^\ast}{\sum_{\tau=1}^T M^{(t)}}$. Take the logarithms of the
both side and simplify as well as notice the fact $\varepsilon'$
obtained by Lemma~\ref{lem:inequality} has the property that
$\varepsilon'\geq \varepsilon(1-\varepsilon)$, then we have:
\begin{equation*}
 (1-\varepsilon)\sum_{\tau=1}^T \ip{\rho^{(\tau)}}{M^{(\tau)}} \leq
 \ip{\rho^{\ast}}{\sum_{\tau=1}^T M^{(\tau)}}+\frac{\ln N}{\varepsilon}
\end{equation*}
for any density operator $\rho^{\ast} \in \density{\X}$ as required.
\end{proof}

 For the sake of completeness, we prove the
lemma we used in the proof below.
\begin{lemma} \label{lem:inequality}
For any $0\leq \varepsilon \leq 1/2$, let
$\varepsilon'=1-e^{-\varepsilon}$. Then we have the following matrix
inequality, for any $0\leq M \leq \I$,
\[
  \exp (- \varepsilon M) \leq \I-\varepsilon' M
\]
Moreover, we have $\varepsilon'\geq \varepsilon(1-\varepsilon)$.
\end{lemma}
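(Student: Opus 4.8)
The plan is to prove the scalar inequality first and then lift it to the matrix setting by the standard functional-calculus argument. Concretely, I would fix $\varepsilon$ with $0 \le \varepsilon \le 1/2$, set $\varepsilon' = 1 - e^{-\varepsilon}$, and show that for every real $x \in [0,1]$ one has $e^{-\varepsilon x} \le 1 - \varepsilon' x$. Once this scalar statement is in hand, the matrix inequality $\exp(-\varepsilon M) \le \I - \varepsilon' M$ for $0 \le M \le \I$ follows immediately: diagonalize $M = \sum_i \mu_i \ket{v_i}\!\bra{v_i}$ with each eigenvalue $\mu_i \in [0,1]$, note that both $\exp(-\varepsilon M)$ and $\I - \varepsilon' M$ are diagonal in the same eigenbasis, and apply the scalar bound eigenvalue-by-eigenvalue.

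For the scalar inequality I would use convexity. The function $f(x) = e^{-\varepsilon x}$ is convex on $[0,1]$, so its graph lies below the chord joining $(0, f(0)) = (0,1)$ and $(1, f(1)) = (1, e^{-\varepsilon})$. That chord is exactly the line $x \mapsto 1 - (1 - e^{-\varepsilon})x = 1 - \varepsilon' x$. Hence $e^{-\varepsilon x} \le 1 - \varepsilon' x$ for all $x \in [0,1]$, with no constraint on $\varepsilon$ beyond $\varepsilon \ge 0$ even needed for this part. This is the cleanest route and avoids any case analysis.

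It remains to verify the auxiliary bound $\varepsilon' \ge \varepsilon(1-\varepsilon)$, which is where the hypothesis $\varepsilon \le 1/2$ enters. I would expand $\varepsilon' = 1 - e^{-\varepsilon}$ via its alternating-ish Taylor series $\varepsilon - \varepsilon^2/2 + \varepsilon^3/6 - \cdots$, or more cleanly define $g(\varepsilon) = 1 - e^{-\varepsilon} - \varepsilon + \varepsilon^2$, check $g(0) = 0$, and differentiate: $g'(\varepsilon) = e^{-\varepsilon} - 1 + 2\varepsilon$. Then $g'(0) = 0$ and $g''(\varepsilon) = -e^{-\varepsilon} + 2 \ge 0$ for $\varepsilon \ge 0$ (indeed for all $\varepsilon \ge -\ln 2$), so $g'$ is nondecreasing and hence $g' \ge 0$ on $[0,\infty)$, giving $g \ge 0$ there — in particular on $[0,1/2]$. (In fact this shows $\varepsilon' \ge \varepsilon - \varepsilon^2$ on all of $[0,\infty)$; the restriction $\varepsilon \le 1/2$ is only needed to keep $\varepsilon(1-\varepsilon)$ a meaningful positive quantity for the application in Theorem~\ref{thm:mmw_simple}.) The main obstacle, such as it is, is simply being careful that the lifting step is legitimate — i.e. invoking that for commuting Hermitian operators (here $M$ and the identity), an inequality between functions that holds pointwise on the joint spectrum lifts to the operator ordering — but this is routine and follows directly from the spectral theorem.
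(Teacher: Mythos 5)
Your proof is correct and follows essentially the same route as the paper: establish the scalar inequality $e^{-\varepsilon x}\le 1-\varepsilon' x$ on $[0,1]$ and lift it to operators by diagonalizing $M$. The paper simply asserts that the scalar bound and $\varepsilon'\ge\varepsilon(1-\varepsilon)$ are ``easy to verify,'' whereas you supply the (clean and correct) chord-of-a-convex-function argument and the second-derivative check, so your write-up is if anything more complete.
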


\begin{proof}
It is easy to verify that for any $\varepsilon \in [0,1/2]$, we have
\[
 f(x) \defeq{} \exp ( -\varepsilon x ) \leq g(x)\defeq{} 1- \varepsilon'x  \quad \quad \text{if} \quad x \in
 [0,1]
\]
Since $0\leq M \leq \I$, let $M=UDU^{\dagger}$ be the diagonlization
of $M$. Then,$f(M)-g(M)=U(f(D)-g(D))U^{\dagger}$. Since $D$ is a
diagonal matrix of which every diagonal entry contains one
eigenvalue of $M$ and thus is in $[0,1]$, then $f(D)-g(D)$ is a
diagonal matrix with non-positive diagonal due to the inequality
above. Thus, $f(M)-g(M)$ is a negative semidefinite matrix and hence
$f(M) \leq g(M)$, namely, $  \exp (- \varepsilon M) \leq
\I-\varepsilon' M$. It is also easy to verify that $\varepsilon'\geq
\varepsilon(1-\varepsilon)$.
\end{proof}

\section{Comments on precision issues}
\label{sec:precision_issue}

The analysis made in the main part of this paper has assumed that
all computations performed by the algorithm are exact. However, in
order to implement our algorithm, some step of the computations must
be approximate. Particularly, the computation of the positive
eigenspace projections and the matrix exponentials will need to be
approximate. An elaborated analysis on these issues can be found in
~\cite{JainJUW09,JainW09}. We will basically follow that type of
analysis and provide a brief sketch of the analysis to the specific
problem in our paper.

First, it must be made clear which part of the algorithm can be made
exact and which part must be made approximate. We will use the same
convention of storing complex numbers as the one
in~\cite{JainJUW09}. Once the input $x$ is given and stored in
memory, all elementary matrix operations (in this case: addition,
multiplication, and computation of the trace or partial trace) can
be implemented exactly in \class{NC}~\cite{vzGathen93}. However, the
matrix exponentials and positive eigenspace projection cannot be
exact since these operations will generate irrational numbers and
the precision must be truncated at somewhere. Fortunately,
Watrous\emph{ et al.}~\cite{JainJUW09} provided a way to approximate
those two operations to high precision in \class{NC}. Precisely,

\begin{fact} \label{fact:exp_NC}
Given an $n \times n$ matrix $M$ (whose operator norm bounded by
$k$) and a positive rational number $\eta$, the computation of $n
\times n$ matrix $X$ such that $\snorm{exp(M)-X} < \eta$ can be done
in \class{NC}.
\end{fact}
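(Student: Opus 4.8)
The plan is to prove Fact~\ref{fact:exp_NC} by approximating $\exp(M)$ with a truncated Taylor series, carrying out all intermediate arithmetic exactly over the rationals and rounding only at the end, and then verifying that the resulting circuit family has polylogarithmic depth and polynomial size.

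First I would fix the truncation degree. For the partial sum $S_d = \sum_{j=0}^{d} M^j/j!$ one has $\snorm{\exp(M) - S_d} \leq \sum_{j>d} \snorm{M}^j/j! \leq \sum_{j>d} k^j/j!$, and since $k^j/j! \leq (ek/j)^j \leq 2^{-j}$ once $j \geq 2ek$, this tail is below $2^{-d}$ whenever $d \geq \max\{\ceil{2ek}, \ceil{\log_2(2/\eta)}\}$. I would take $d$ equal to that bound, so that the truncation error is at most $\eta/2$; note that $d$ is polynomial in $n$, $k$, and $\log(1/\eta)$, which is the regime in which the Fact is applied (in the algorithm of Fig.~\ref{fig:minmax}, the matrix fed to the exponential has operator norm $\varepsilon\snorm{\sum_{\tau}M^{(\tau)}} \leq \varepsilon T$, which is polynomial).

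Next I would compute $S_d$. The powers $M^0, M^1, \ldots, M^d$ can all be produced by an iterated-matrix-product network (parallel prefix / repeated squaring): computing a single $n\times n$ rational matrix product is in \class{NC}~\cite{vzGathen93}, and forming the iterated product of up to $d$ factors adds only a factor $O(\log d)$ to the depth, which stays polylogarithmic since $d$ is. Scaling the $j$-th power by the rational $1/j!$ and adding the $d+1$ resulting matrices is one further \class{NC} step. For the size bound one checks bit-lengths: if $\ell$ bounds the bit-length of the entries of $M$, then the numerators and denominators of $M^j$ have bit-length $O(j(\ell+\log n))$, the rationals $1/j!$ have bit-length $O(j\log j)$, and hence the entries of $S_d$ are rationals of bit-length polynomial in $n$, $k$, and $\log(1/\eta)$. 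Finally, rounding $S_d$ entrywise to $\ceil{\log_2(4n/\eta)}$ bits of precision produces the output $X$ with an additional operator-norm error below $\eta/2$, so $\snorm{\exp(M)-X}<\eta$; the rounding is again in \class{NC}, and composing the pieces gives one \class{NC} circuit.

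The only genuinely delicate point, which I expect to be the main obstacle, is the bookkeeping that keeps everything inside \class{NC}: one must confirm that the truncation degree $d = O(k + \log(1/\eta))$ and all intermediate numerator/denominator bit-lengths remain polynomial, so that the depth stays polylogarithmic. When $k$ is large it is cleaner to use the scaling-and-squaring identity $\exp(M) = \bigl(\exp(M/2^s)\bigr)^{2^s}$ with $2^s \simeq k$, applying the above to the norm-$\leq 1$ matrix $M/2^s$ and then squaring $s$ times; there the subtlety shifts to an error-propagation estimate showing that the approximation of $\exp(M/2^s)$ need only be sharpened by a factor $2^{-s}(e+1)^{-2^s}$ to survive the $s$ squarings, so that $\log(1/\eta') = O(k+\log(1/\eta))$ and the squaring network (depth $O(s)$ times the depth of one matrix product) remains in \class{NC}.
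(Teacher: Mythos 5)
Your argument is correct, but note that the paper itself does not prove Fact~\ref{fact:exp_NC}: it imports the statement as a black box from~\cite{JainJUW09} (``Watrous \emph{et al.}\ provided a way to approximate those two operations to high precision in \class{NC}''), so there is no internal proof to compare against. Your construction --- exact rational arithmetic for the truncated Taylor sum $S_d=\sum_{j=0}^{d}M^j/j!$, the tail estimate $\sum_{j>d}k^j/j!\leq 2^{-d}$ for $d\geq\max\{\ceil{2ek},\ceil{\log_2(2/\eta)}\}$, reduction of the powers to iterated products of polynomially many rational matrices (in \class{NC} by~\cite{vzGathen93}), a bit-length check, and a single entrywise rounding at the end --- is the standard route and is in substance the argument of the cited reference, so you have supplied a proof where the paper supplies a citation. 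One caveat deserves emphasis, and it explains why the hypothesis ``operator norm bounded by $k$'' appears in the statement even though $k$ does not appear in the conclusion: both your direct construction and your scaling-and-squaring variant yield circuits whose size and depth grow polynomially in $k$ itself (the degree $d$, respectively the required working precision $\log(1/\eta')$, is $\Theta(k+\log(1/\eta))$), not in $\log k$. Hence the Fact is an \class{NC} statement only under the implicit assumption that $k$ is polynomially bounded in $n$. You correctly observe that this holds in the paper's application, where the exponent is $-\varepsilon\sum_{\tau}M^{(\tau)}$ with $0\leq M^{(\tau)}\leq\I$, so that $k\leq\varepsilon T=O(\ln N/\delta)$ is polylogarithmic in the matrix dimension $N$; it would be worth stating this restriction explicitly rather than leaving it to the regime of application.
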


\begin{fact} \label{fact:pos_NC}
Given an $n \times n$ Hermitian matrix $H$ and a positive rational
number $\eta$, the computation of an $n \times n$ positive
semidefinite matrix $\Delta \leq \I$ such that
$\snorm{\Delta-\Lambda} <\eta$ for $\Lambda$ being the projection
operator onto the positive eigenspace of $H$ can be done in
\class{NC}.
\end{fact}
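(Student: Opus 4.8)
This is the positive-eigenspace-projection tool we borrow from~\cite{JainJUW09}, so the plan is to recall how it is proved: realize $\Lambda$ through the functional calculus and then approximate the relevant scalar function in bounded depth. After rescaling we may assume $\snorm{H}\le 1$, so that $\Lambda=\tfrac12\big(\I+\operatorname{sgn}(H)\big)$ whenever $0$ is not an eigenvalue of $H$; moreover every matrix produced below is a function of $H$, hence diagonal in the eigenbasis of $H$, so it suffices to approximate $\operatorname{sgn}$ as a scalar function uniformly on $[-1,1]$. I would use the cubic $q(x)=\tfrac{3x-x^3}{2}$, which maps $[-1,1]$ into itself, fixes $0$ and $\pm1$, and pushes every $x\in(0,1]$ toward $1$ and every $x\in[-1,0)$ toward $-1$; iterating it $k=O\!\big(\log(1/\delta)+\log\log(1/\eta)\big)$ times carries $\{x:|x|\ge\delta\}$ into the $\eta$-neighbourhood of $\{\pm1\}$. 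Taking $\Delta:=\tfrac12\big(\I+q^{\circ k}(H)\big)$ then gives $\snorm{\Delta-\Lambda}\le\eta$ whenever every nonzero eigenvalue of $H$ has modulus at least $\delta$, and, since $q^{\circ k}$ maps $[-1,1]$ into $[-1,1]$, we get $0\le\Delta\le\I$ for free.

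For the \class{NC} claim I would observe that one application of $q$ is a constant number of matrix multiplications and scalar multiples, i.e.\ $O(\log n)$ depth, so $q^{\circ k}(H)$ and then $\Delta$ are computed in depth $O(k\log n)$; writing the map as an iterated composition (rather than expanding a single high-degree polynomial) also keeps every intermediate matrix entry of polynomially bounded bit-length, which is what makes this an \class{NC} computation as soon as $k=\mathrm{polylog}(n)$. An equally good alternative, also in \class{NC}, evaluates a discretised resolvent integral $\Lambda=\tfrac1{2\pi i}\oint (z\I-H)^{-1}\,dz$ using the \class{NC} matrix-inversion algorithm at $O(\mathrm{polylog})$ quadrature nodes placed a distance $\Omega(\delta)$ from the spectrum of $H$; the precision bookkeeping there is the same as for Fact~\ref{fact:exp_NC}. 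In either route the only free parameter that can hurt us is $\delta$.

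I expect the real obstacle to be precisely the behaviour of $H$ near the eigenvalue $0$: if $H$ has an eigenvalue equal to, or exponentially close to, $0$, then no bounded-depth function of $H$ can tell ``zero'' apart from ``positive'', and $\snorm{\Delta-\Lambda}$ cannot be made small in general. The way~\cite{JainJUW09} gets around this---and what I would cite here---is that such eigenvalues never matter downstream: the projection is consumed only through inner products $\ip{\Pi^{(t)}}{\Xi(\rho^{(t)})}$, as in Fig~\ref{fig:minmax}, which depend continuously on the projector, so replacing the exact positive-eigenspace projection by a ``smoothed'' projection that is undecided on eigenvalues within $\delta$ of $0$ changes the value returned by the algorithm of Theorem~\ref{thm:algorithm} by an amount that the analysis of~\cite{JainJUW09} shows stays within the algorithm's tolerance once $\delta$ is chosen small enough. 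Thus the clean thing to establish is the stated \class{NC} bound under the harmless hypothesis that the nonzero spectrum of $H$ is $1/\mathrm{polylog}$-separated from $0$ (all the application needs), together with this robustness remark; the full precision analysis of both points is carried out in~\cite{JainJUW09} and we do not reproduce it here.
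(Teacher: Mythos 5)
The paper offers no proof of Fact~\ref{fact:pos_NC} at all: it is imported as a black box from~\cite{JainJUW09}, so there is no in-paper argument to compare yours against. Taken on its own terms, your reconstruction is the right one: reduce to approximating the sign function on the spectrum of $H$, either by iterating an odd polynomial such as $q(x)=(3x-x^3)/2$ (whose superattracting fixed points at $\pm 1$ give the $O(\log(1/\delta)+\log\log(1/\eta))$ iteration count) or by an NC-evaluated resolvent contour integral, and note that $0\le\Delta\le\I$ comes for free because $q$ preserves $[-1,1]$. One repair is needed in the complexity bookkeeping: composing the cubic exactly triples the entry bit-length at every step, so you must truncate after each composition; since $q$ is $O(1)$-Lipschitz on Hermitian operators of norm at most $1$, the accumulated rounding error is $2^{O(k)}$ times the per-step truncation and remains controllable for $k=\mathrm{polylog}$, but the claim that iterated composition by itself "keeps every intermediate entry of polynomially bounded bit-length" is only true once this truncation is inserted.

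More importantly, you are right to flag that the Fact as printed is too strong, and this should be said explicitly rather than left as a side remark: $\Lambda$ is a discontinuous function of $H$ at matrices with an eigenvalue at $0$, and separating an eigenvalue that is doubly exponentially close to $0$ from one that equals $0$ forces resolvent norms, hence precisions, outside \class{NC}. No algorithm can achieve $\snorm{\Delta-\Lambda}<\eta$ in operator norm for arbitrary Hermitian $H$. What the application actually consumes --- both in Fig~\ref{fig:minmax} and in the error propagation of Appendix~\ref{sec:precision_issue} --- is only the inner-product guarantee that $\ip{\Delta}{H}$ nearly attains $\max_{0\le\Pi\le\I}\ip{\Pi}{H}=\tr(H_{+})$, a quantity that \emph{is} continuous in $H$ because eigenvalues near $0$ contribute negligibly to it; this weaker guarantee is what~\cite{JainJUW09} establishes and what makes the analysis of Theorem~\ref{thm:algorithm} go through. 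Your proposed patch (prove the operator-norm bound only under a $1/\mathrm{polylog}$ separation of the nonzero spectrum from $0$, plus a robustness remark for the undecided eigenvalues) is an equivalent and legitimate way to state the Fact so that it is both true and sufficient; as written in the paper it is neither proved nor, strictly speaking, provable.
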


Before we move on to the analysis of the precision issue, it helps
to introduce the following convention. We will represent the actual
matrices generated during the algorithm by placing a tilde over the
variables that represent the idealized values. As we discussed
above, there are mainly two types of operations where the accuracy
will be lost. Further investigation tells us that the matrix
exponentials are always necessary to the multiplicative weight
update method while the positive eigenspace projections are special
for our application. For the generality of the analysis, we will
first discuss what the general form of Theorem~\ref{thm:mmw_simple}
is when the computation is only approximate.

Consider the scheme in Fig~\ref{fig:mmw} and keep the notation
convention in mind. The $\tilde{\rho}^{(t)}$ will be the actual
generated density operator for round $t$ and
$W^{(t+1)}=exp(-\epsilon\sum_{\tau=1}^t \tilde{M}^{(t)})$. The
latter one is exact simply because $W^{(t)}$ is only a notation and
not stored in the memory at all. Fact~\ref{fact:exp_NC} implies that
$\snorm{\tilde{\rho}^{(t)}-W^{(t)}/\tr W^{(t)}} < \delta_1/N$ for
every $t$ where $\delta_1$ is some constant for our purpose. The
situation for $\tilde{M}^{(t)}$ is tricky in the sense that there is
no idealized value for $M^{(t)}$ to have in the general scheme. By
going through the proof of Theorem~\ref{thm:mmw_simple} again, we
can easily obtain the following fact.

\begin{fact}
If the computation can only be performed approximately, the
inequality in Theorem~\ref{thm:mmw_simple} becomes
\[
     (1-\epsilon)\sum_{t=1}^T
    \ip{\tilde{\rho}^{(t)}}{\tilde{M}^{(t)}} \leq \ip{\rho^{\ast}}{\sum_{t=1}^T
    \tilde{M}^{(t)}} + \frac{lnN}{\epsilon} + \frac{1}{2} T \delta_1
\]
\end{fact}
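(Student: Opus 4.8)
The plan is to rerun the proof of Theorem~\ref{thm:mmw_simple} essentially verbatim, isolating the one place where exactness of the computation was invoked: the substitution $\rho^{(t)}=W^{(t)}/\tr W^{(t)}$. In the approximate scheme the weight matrix $W^{(t+1)}=\exp(-\epsilon\sum_{\tau=1}^t\tilde M^{(\tau)})$ is still \emph{exactly} determined by the loss matrices $\tilde M^{(\tau)}$ that are actually used (it is never stored; only $\tilde\rho^{(t)}$ is), so no ``idealized $M^{(t)}$'' error has to be tracked and the inequality one obtains is genuinely in terms of the $\tilde M^{(t)}$ appearing on both sides. By Fact~\ref{fact:exp_NC} we may assume $\snorm{\tilde\rho^{(t)}-W^{(t)}/\tr W^{(t)}}<\delta_1/N$ in operator norm, and we keep the standing hypothesis $0\leq\tilde M^{(t)}\leq\I_\X$. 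The chain of inequalities up through $\tr(W^{(t+1)})\leq\tr[W^{(t)}]\,(1-\epsilon'\ip{W^{(t)}/\tr W^{(t)}}{\tilde M^{(t)}})$, where $\epsilon'=1-e^{-\epsilon}$, is untouched, since it uses only Golden--Thompson and Lemma~\ref{lem:inequality} together with $0\leq\tilde M^{(t)}\leq\I_\X$.

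The single new step is to pass from $W^{(t)}/\tr W^{(t)}$ to $\tilde\rho^{(t)}$ inside the inner product. Set $A^{(t)}=W^{(t)}/\tr W^{(t)}-\tilde\rho^{(t)}$; this is a difference of two density operators, hence Hermitian and \emph{traceless}, so $\ip{A^{(t)}}{\tilde M^{(t)}}=\ip{A^{(t)}}{\tilde M^{(t)}-\tfrac12\I_\X}$. Since $0\leq\tilde M^{(t)}\leq\I_\X$ gives $\snorm{\tilde M^{(t)}-\tfrac12\I_\X}\leq\tfrac12$, and $\snorm{A^{(t)}}_1\leq N\snorm{A^{(t)}}<\delta_1$, we get $|\ip{A^{(t)}}{\tilde M^{(t)}}|\leq\tfrac12\delta_1$ — this is exactly where the factor $\tfrac12$ in the claimed bound comes from. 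Hence $\ip{W^{(t)}/\tr W^{(t)}}{\tilde M^{(t)}}\geq\ip{\tilde\rho^{(t)}}{\tilde M^{(t)}}-\tfrac12\delta_1$, and using $1-x\leq e^{-x}$ we obtain the per-round estimate $\tr(W^{(t+1)})\leq\tr[W^{(t)}]\exp\!\big(-\epsilon'(\ip{\tilde\rho^{(t)}}{\tilde M^{(t)}}-\tfrac12\delta_1)\big)$.

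Iterating from $W^{(1)}=\I_\X$ yields $\tr[W^{(T+1)}]\leq N\exp\!\big(-\epsilon'\sum_{t=1}^T\ip{\tilde\rho^{(t)}}{\tilde M^{(t)}}+\tfrac12\epsilon' T\delta_1\big)$, while, exactly as in the original proof, $\tr[W^{(T+1)}]\geq\exp\!\big(-\epsilon\,\lambda_N(\sum_t\tilde M^{(t)})\big)\geq\exp\!\big(-\epsilon\ip{\rho^{\ast}}{\sum_t\tilde M^{(t)}}\big)$ for every $\rho^{\ast}\in\density{\X}$. Taking logarithms, dividing by $\epsilon$, using $\sum_t\ip{\tilde\rho^{(t)}}{\tilde M^{(t)}}\geq0$ and the two bounds $\epsilon(1-\epsilon)\leq\epsilon'\leq\epsilon$ (so $\epsilon'/\epsilon\in[1-\epsilon,1]$) in the same way as in the proof of Theorem~\ref{thm:mmw_simple}, gives $(1-\epsilon)\sum_{t=1}^T\ip{\tilde\rho^{(t)}}{\tilde M^{(t)}}\leq\ip{\rho^{\ast}}{\sum_{t=1}^T\tilde M^{(t)}}+\frac{\ln N}{\epsilon}+\tfrac12 T\delta_1$, which is the assertion. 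There is no real obstacle here: the whole content is the careful bookkeeping of the first paragraph (making sure the only approximation that enters is the $\tilde\rho^{(t)}$ substitution) and the elementary observation that tracelessness of $A^{(t)}$ lets one shift $\tilde M^{(t)}$ by $\tfrac12\I_\X$ and thereby halve the per-round loss; everything else is the original argument run again with tildes.
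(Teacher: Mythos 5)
Your proof is correct and follows exactly the route the paper intends (the paper itself offers no written proof of this Fact, only the remark that one should ``go through the proof of Theorem~\ref{thm:mmw_simple} again''). Your tracelessness argument --- replacing $\tilde M^{(t)}$ by $\tilde M^{(t)}-\tfrac12\I_\X$ inside $\ip{A^{(t)}}{\cdot}$ so that $\snorm{A^{(t)}}_1<\delta_1$ and $\snorm{\tilde M^{(t)}-\tfrac12\I_\X}\leq\tfrac12$ yield a per-round error of $\tfrac12\delta_1$ --- is a clean and accurate reconstruction of where the otherwise unexplained factor $\tfrac12$ in the paper's bound comes from.
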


Now consider the concrete $\tilde{M}^{(t)}$ in Fig~\ref{fig:minmax}.
By making use of the fact above we can repeat almost all the steps
in the proof of Theorem~\ref{thm:algorithm}. The only change is to
replace the Equation~[\ref{eqn:mmw_steptwo}] by
\[
\lambda = \frac{1}{T} \sum_{\tau=1}^{T}
\ip{\tilde{\rho}^{(\tau)}}{\Xi^{\ast}(\tilde{\Pi}^{(\tau)})} \leq
\frac{1}{T} \ip{\rho^{\ast}}{\sum_{\tau=1}^T
\Xi^{\ast}(\tilde{\Pi}^{(\tau)})} + \delta + \delta_1
\]

By Fact~\ref{fact:pos_NC}, we have
$\snorm{\tilde{\Pi}^{(t)}-\Pi^{(t)}} < \delta_1/N$ where the
$\Pi^{(t)}$ is the projection onto the positive eigenspace of
$\Xi(\tilde{\rho}^{(t)})$. Please note that $\Pi^{(t)}$ here is not
its idealized value when everything is exact but rather the exact
value given the approximate $\tilde{\rho}^{(t)}$. The rest part of
the proof follows similarly. Finally, we will get
\[
\equil{\lambda}(\Xi)-\delta_1 \leq \lambda=\frac{1}{T}
\sum_{\tau=1}^{T}
\ip{\tilde{\rho}^{(\tau)}}{\Xi^{\ast}(\tilde{\Pi}^{(\tau)})} \leq
\equil{\lambda}(\Xi)+\delta+\delta_1
\]

Since our target is to distinguish between two promises with
constant gap, we can choose $\delta_1$ to be any sufficiently small
constant. In this way, the precision issues are handled.

\bibliographystyle{alpha}

\end{document}